\documentclass[11pt,english]{article}
\usepackage{amsmath}
\usepackage{amstext}
\usepackage{amssymb}
\usepackage{amsthm}
\usepackage{mathptmx}
\usepackage{url}
\usepackage[authoryear]{natbib}
\usepackage{babel}
\usepackage{aliascnt}
\usepackage[includefoot,right=3.5cm,left=3.5cm,top=3.5cm,bottom=3cm]{geometry}
%
%

\newcommand{\be}{\begin{equation}}
\newcommand{\ee}{\end{equation}}
\newcommand{\bea}{\begin{eqnarray}}
\newcommand{\eea}{\end{eqnarray}}
\newcommand{\beas}{\begin{eqnarray*}}
\newcommand{\eeas}{\end{eqnarray*}}

%
%
\newtheorem{theorem}{Theorem}[section]
\newtheorem{definition}[theorem]{Definition}
\newtheorem{prop}[theorem]{Proposition}
\newtheorem{corollary}[theorem]{Corollary}

\newtheorem{example}[theorem]{Example}
\newtheorem{examples}[theorem]{Examples}
\newtheorem{foo}[theorem]{Remarks}

\newaliascnt{proposition}{theorem}
\aliascntresetthe{proposition}





\pagestyle{empty}

\begin{document}
\title{Suitability of Capital Allocations for Performance Measurement}
\date{July 2014}
\author{Eduard Kromer\footnote{Corresponding author.}
\footnote{We would like to thank an anonymous referee for helpful remarks
and suggestions that led to an improvement of the paper.}\\
Department of Mathematics, University of Gie\ss en,\\
35392 Gie\ss en, Germany;\\
email: eduard.kromer@math.uni-giessen.de
\and
Ludger Overbeck\\
Department of Mathematics, University of Gie\ss en,\\
35392 Gie\ss en, Germany;\\
email: ludger.overbeck@math.uni-giessen.de}

\maketitle \thispagestyle{empty}
\begin{abstract}
Capital allocation principles are used in various contexts in which the
risk capital or the cost of an aggregate position has to be allocated
among its constituent parts. We study capital allocation principles
in a performance measurement framework. We introduce the notation of
suitability of allocations for performance measurement and show under
different assumptions on the involved reward and risk measures that
there exist suitable allocation methods. The existence of certain
suitable allocation principles generally is given under rather strict
assumptions on the underlying risk measure. Therefore we show, with a
reformulated definition of suitability and in a slightly modified
setting, that there is a known suitable allocation principle that
does not require any properties of the underlying risk measure.
Additionally we extend a previous characterization result from the
literature from a mean-risk to a reward-risk setting. Formulations of
this theory are also possible in a game theoretic
setting.\\[2mm]
{\bf Keywords} Risk capital allocation, subgradient allocation,
cost allocation, suitability for performance measurement, reward
measure, risk measure
\end{abstract}

\section{Introduction}

Capital allocation principles are used in various contexts in which the
risk capital or a cost of an aggregate position has to be allocated
among its constituent parts. The aggregate position, for instance, can
be a portfolio of credit derivatives, the profit, cost or turnover of a
firm or simply the outcome of a cooperative game. The importance of
capital allocation principles in credit portfolio modeling is outlined
in \cite{Kalkbrener2004}. Capital allocations can even be used to
construct tax schemes for pollution emission, see \cite{Cheridito2011}.
In any case one is interested in how to allocate (distribute) the
aggregate cost/profit among the units which are involved in generating
this cost/profit.

The main financial mathematics literature dealing with this topic
follows a structural approach to capital allocations and studies the
properties of specific allocation methods. \citet{Kalkbrener2005},
for instance, studies in detail the properties of the gradient
allocation and places them in context with the properties of the
underlying coherent risk measure. Other contributions, like
\cite{Denault2001}, \cite{Fischer2003} and \cite{McNeil2005},
similarly focus on the gradient allocation and study its properties.
This allocation method only exists if the underlying risk measure is
G{\^a}teaux-differentiable. This requirement is a strong one and
several risk measures in general fail to be
G{\^a}teaux-differentiable. For an overview of this aspect we refer
to \cite{Cheridito2011}. For the existence of the subgradient
allocation that was introduced in \cite{Delbaen2000}, we need weaker
requirements. Continuity of the underlying risk measure, to name one,
is a sufficient condition to ensure the existence of a subgradient
allocation, see Theorem 2.4.9 in \cite{Zalinescu2002}. There are also allocation principles in the literature
that do not require any properties of the risk measure and still have
some desirable structural properties. Recent examples for such
allocation principles are the ordered contribution allocations from
\cite{Cheridito2011} or the with-without allocation from
\cite{Merton1993} and \cite{Matten1996}.

In this work we will focus on another aspect of capital allocation
methods. These can be used as important tools in performance
measurement for portfolios and firms, and consequently for portfolio
optimization. The literature that studies capital allocations in this
framework, to our knowledge, narrows down to the works of
\cite{Tasche2004} and \cite{Tasche2008}. He studies the gradient
allocation principle based on G{\^a}teaux-differentiable risk
measures and uses mean-risk ratios as performance measures. He
introduces a definition of suitability for performance measurement
for capital allocations and characterizes the gradient allocation as
the unique allocation that is suitable for performance measurement.
This is a different approach to this topic compared to the structural
approaches we previously mentioned. We will take up and extend this
approach in different directions. We will show in Theorem
\ref{prop:suitability} and Theorem \ref{thm:31}, with a slightly
modified definition of suitability, that we can relax the requirement
of G{\^a}teaux-differentiability of the underlying risk measure to a
(local) continuity property. This indeed preserves the existence of
capital allocations that are suitable for performance measurement,
but we loose the uniqueness. As we will show, any subgradient
allocation is suitable for performance measurement. If the underlying
risk measure is G{\^a}teaux-differentiable, the subgradient
allocation is unique and reduces to the gradient allocation. In this
case our result reduces to the result of \cite{Tasche2004}. In
another setting, where we restrict our performance measure to
portfolios that do not allow arbitrage opportunities and
irrationality we are able to work with a stronger formulation of
suitability and to derive in this setting an existence and uniqueness
result. In the following we will work with classes of reward-risk
ratios, which were recently introduced in \cite{Cheridito2012}. These
are generalizations of mean-risk ratios from \cite{Tasche2004}. We
will show that a further modification of the definition of
suitability for performance measurement allows us to carry over the
whole approach to a cooperative game-theoretic setting and to consider
cost allocation methods. This setting has the benefit that we do not
require any properties of the underlying cost function to ensure
existence of suitable cost allocation methods. Moreover, there is a
direct connection to the risk-capital based setting that allows us to
apply these results directly to risk measures without requiring the
risk measure to have any specific properties.

The outline of the paper is as follows. In Section \ref{sec:CA} we
will introduce our notation and give a short overview of the capital
allocation methods that will be relevant for us in the remaining
work. We will work with reward-risk ratios as performance measures.
These will be introduced in Section \ref{sec:suit}. In this section
we will furthermore introduce the definition of suitability for
performance measurement with reward-risk ratios and present our main
results. In Section \ref{sec:game} we will modify the setting from
Section \ref{sec:suit} to a cooperative game-theoretic setting and show
in this new framework the existence of capital allocations that are
suitable for performance measurement.

\section{Notation, definitions and important properties of
risk measures and capital allocations}\label{sec:CA}
A financial position, the value of a firm or the outcome of a
cooperative game will be modeled as a real valued random variable $X$
from $L^p$, $1\leq p \leq \infty$, on a probability space $(\Omega,
\mathcal{F},\mathbb{P})$. As usual, we identify two random variables if
they agree $\mathbb{P}$-a.s. Inequalities between random variables are
understood in the $\mathbb{P}$-a.s.\,sense.

We will work with convex and with coherent risk measures
$\rho:L^p\rightarrow \mathbb{R}\cup\{+\infty\}$. Coherent risk measures
were introduced and studied in \cite{Artzner1999}. We will call a
functional from $L^p$ to $\mathbb{R}\cup\{+\infty\}$ a coherent risk
measure if it has the following four properties.
\begin{description}
\item[(M)] \emph{Monotonicity:} $\rho(X)\leq \rho(Y)$ for all $X,Y\in
    L^p$  such that $X\geq Y$.
\item[(T)] \emph{Translation property:} $\rho(X+m)=\rho(X)-m$ for all
    $X\in L^p$ and $m\in\mathbb{R}$.
\item[(S)] \emph{Subadditivity:} $\rho(X+Y)\leq \rho(X) + \rho(Y)$
    for all $X,Y\in L^p$.
\item[(P)] \emph{Positive homogeneity:} $\rho(\lambda X) =
    \lambda\rho(X)$ for all $X\in L^p$ and $\lambda\in \mathbb{R}_+$.
\end{description}
From the properties (S) and (P) it follows that coherent risk measures
are convex, accordingly they satisfy the property
\begin{description}
\item[(C)] \emph{Convexity:} $\rho(\lambda X + (1-\lambda)Y)\leq
    \lambda\rho(X) + (1-\lambda)\rho(Y)$ for all $X,Y\in L^p$ and
    $\lambda\in (0,1)$.
\end{description}
Functionals with the properties (M), (T) and (C) are called convex risk
measures. These were introduced and studied in \cite{Foellmer2002} and
\cite{Frittelli2002}. We will denote the domain of a convex risk
measure $\rho$ with $\text{dom\,}\rho=\{X\in L^p\,|\,\rho(X)<\infty\}$.
$\rho$ is a \emph{proper} convex risk measure if the domain of $\rho$
is nonempty. Since it only makes sense to allocate a \emph{finite} risk
capital of an aggregate position among its constituent parts we will
only work with aggregate positions $X$ from the domain of $\rho$. We
know from the extended Namioka-Klee Theorem from \cite{Frittelli2010}
that proper convex risk measures are continuous on the interior of
their domain. From this it follows that finite valued convex risk
measures on $L^p$ are continuous on $L^p$, $1\leq p \leq \infty$. In
particular proper convex risk measures on $L^{\infty}$ are
automatically finite and continuous on $L^{\infty}$. In the following
we will need the concept of the subdifferential of a convex risk
measure on $L^p$. The subdifferential of a convex risk measure at $X\in
\text{dom\,}\rho$, $1\leq p <\infty$ is the set
\begin{equation}\label{eq:subdiff}
\partial \rho(X)
=
\left\{
\left.
\xi\in L^{q}\,
\right|\,
\rho(X+Y)
\geq
\rho(X) + \mathbb{E}[\xi Y]
\quad\forall Y\in L^p
\right\},
\end{equation}
where $q$ is such that $1/q + 1/p =1$ and $L^q$ is the dual space of
$L^p$. We will call $\rho$ subdifferentiable at $X$ if
$\partial\rho(X)$ is nonempty. We know from Proposition 3.1 in
\cite{Ruszczynski2006} that $\rho$ with the properties (M) and (C) is
continuous and subdifferentiable on the interior of its domain. For
$p=\infty$ we can consider the weak*-subdifferential introduced in
\cite{Delbaen2000}
\[
\partial \rho(X)
=
\left\{
\left.
\xi\in L^1\,
\right|\,
\rho(X+Y)
\geq
\rho(X) + \mathbb{E}[\xi Y]
\quad\forall Y\in L^{\infty}
\right\}
\]
In contrast to \eqref{eq:subdiff} this set can be empty at some points
in $L^{\infty}$ even if the risk measure is finite valued and thus
continuous on $L^{\infty}$. For more details about the
weak*-subdifferential we refer to \cite{Delbaen2000}. It is known for a continuous risk measure $\rho$ that
$\partial \rho(X)$ is a singleton if and only if $\rho$ is
G\^{a}teaux-differentiable, see for instance Corollary 2.4.10 in \cite{Zalinescu2002}. Then
$\mathbb{E}(\xi Y)$ for $\xi\in\partial\rho(X)$ and $Y\in L^p$ is the
G\^{a}teaux-derivative at $X$ in the direction of $Y$, ie,
\begin{equation}\label{eq:subgradient_allocation}
  \mathbb{E}(\xi Y)
  =
  \lim_{h\rightarrow 0}
  \frac{\rho(X+h Y)-\rho(X)}{h}.
\end{equation}

All statements concerning the subdifferential of a convex risk measure
obviously remain true for the superdifferential of a concave reward
measure which, with respect to the properties (M)-(C) above, can be
understood as the negative of a risk measure. The superdifferential of
a concave reward measure $\theta:L^p \to \mathbb{R}\cup\{-\infty\}$,
$1\leq p < \infty$ at $X\in \text{dom\,}\theta$ is the set
\begin{equation}\label{eq:superdifferential}
\partial \theta(X)
=
\left\{
\left.
\psi\in L^{q}\,
\right|\,
\theta(X+Y)
\leq
\theta(X) + \mathbb{E}[\psi Y]
\quad\forall Y\in L^p
\right\},
\end{equation}
where $q$ is such that $1/q + 1/p=1$ and $L^q$ is the dual space of
$L^p$. As before, the weak*-superdifferential of $\theta$ at $X\in
\text{dom\,}\theta$ is considered for $p=\infty$ as the set
\[
\partial \theta(X)
=
\left\{
\left.
\psi\in L^1\,
\right|\,
\theta(X+Y)
\leq
\theta(X) + \mathbb{E}[\psi Y]
\quad\forall Y\in L^{\infty}
\right\}.
\]
To be precise, we will call a map $\theta:L^p\to
\mathbb{R}\cup\{-\infty\}$ a concave reward measure if it satisfies
\begin{description}
\item[($\mathbf{\overline{M}}$)] $\theta(X)\geq \theta(Y)$ for all
    $X,Y\in L^p$  such that $X\geq Y$.
\item[($\mathbf{\overline{T}}$)] $\theta(X+m)=\theta(X)+m$ for all
    $X\in L^p$ and $m\in\mathbb{R}$.
\item[($\mathbf{\overline{C}}$)] $\theta(\lambda X +
    (1-\lambda)Y)\geq \lambda\theta(X) + (1-\lambda)\theta(Y)$ for
    all $X,Y\in L^p$ and $\lambda\in (0,1)$.
\end{description}
We will call it coherent reward measure if satisfies the property (P)
in addition to the properties $(\overline{M})-(\overline{C})$.

Consider now the aggregate position $X=\sum_{i=1}^n X_i\in
\text{dom\,}\rho$, $n\in\mathbb{N}$. It is of interest for several
reasons outlined in the introduction, to allocate the aggregate risk
capital $\rho(X)$ to the $X_i$, $i=1,\ldots,n$, which are involved in
generating this risk capital. We will denote a capital allocation  as a
vector of real numbers $k=(k_1,\ldots,k_n)\in\mathbb{R}^n$. The
simplest risk capital allocation $k\in \mathbb{R}^n$ is the individual
allocation given by
\[
k_i=\rho(X_i), \ i=1,\ldots,n.
\]
If the underlying risk measure is subadditive, which is property (S) of
coherent risk measures, then the individual allocation overestimates
the total risk, ie,
\[
\rho
\left(
\sum_{i=1}^n X_i
\right)
\leq
\sum_{i=1}^n \rho(X_i)
= \sum_{i=1}^n k_i.
\]
A reasonable allocation principle should satisfy the above inequality
as equality. This is then usually called \emph{full allocation}
property or \emph{efficiency} of the allocation,
\begin{equation}
   \rho
   \left(
   \sum_{i=1}^n X_i
   \right)
   = \sum_{i=1}^n k_i.
\end{equation}
Since the individual allocation only depends on the risk of the
individual asset or subportfolio $X_i$ and does not take into account
the risk of the whole portfolio one is certainly interested in other
allocation principles. Another candidate, that additionally takes into
account the risk of the aggregate position is the with-without
allocation,
\begin{equation}\label{eq:WWA}
  k_i = \rho(X) - \rho(X-X_i).
\end{equation}
This allocation principle, known from \cite{Merton1993} and
\cite{Matten1996}, will be important in Section \ref{sec:game}. For
risk measures with property (S) both principles are related through
\[
\rho(X)-\rho(X-X_i) \leq \rho(X_i),
\]
such that the individual allocation is an upper bound for the
with-without allocation. Nevertheless, both allocation principles in
general do not satisfy the full allocation property if they are based
on coherent risk measures.

If the underlying risk measure is convex and the subdifferential of
$\rho$ at $X$ is nonempty, then for any $\xi\in
\partial \rho(X)$ we can consider the subgradient allocation,
\begin{equation}
  k_i=\mathbb{E}(\xi X_i).
\end{equation}
This allocation principle was introduced in \cite{Delbaen2000} with the
weak*-subdifferential for coherent risk measures on $L^{\infty}$. If
$\partial \rho(X)$ is a singleton the subgradient allocation reduces to
the gradient allocation which is also known as the Euler allocation,
see \cite{Tasche2008}. Then $k_i=\mathbb{E}(\xi X_i)$ is the
G\^{a}teaux-derivative at $X$ in the direction of $X_i$, ie,
\begin{equation}\label{eq:gradient_allocation_2}
  k_i
  =
  \lim_{h_i\rightarrow 0}
  \frac{\rho(X+h_i X_i)-\rho(X)}{h_i}.
\end{equation}
It is known from Euler's Theorem on positively homogeneous functions
(see \cite{Tasche2008}, Theorem A.1) that it satisfies the full
allocation property if the risk measure is
G\^{a}teaux-dif\-fe\-ren\-tiable and positively homogeneous.

In the next sections we will highlight another aspect of capital
allocations. We will study capital allocations with respect to their
ability to give an investor the right signals in a performance
measurement framework.

Before we move to the next section we will report in short the results
of \citet{Tasche2004} which form the basis for this
suitability-for-performance-measurement approach and which we will
generalize in the following. Let $X=\sum_{i=1}^n X_i$ be a portfolio
consisting of $n$ subportfolios $X_i$, $i=1,\ldots,n$. Let
$\emptyset\neq U\subset \mathbb{R}^n$, $n\in\mathbb{N}$, be an open set
and define $\rho_X:U\rightarrow \mathbb{R}$,
$\rho_X(u):=\rho(\sum_{i=1}^n u_i X_i)$. With $m_i=\mathbb{E}[X_i]$ we
similarly define $\mathbb{E}_X(u)=\mathbb{E}[\sum_{i=1}^n
u_iX_i]=\sum_{i=1}^n u_i m_i$. Here each $u\in U$ represents a
portfolio. We denote by $e_i$ the $i$-th unit vector from
$\mathbb{R}^n$, ie $e_i=(0,\ldots,0,1,0,\ldots,0)$ with $1$ at the
$i$-th position of the vector. A risk capital allocation is identified
in this setting by a map
\[
k=(k_1,\ldots,k_n):U\rightarrow\mathbb{R}^n.
\]
\citet{Tasche2004} considers in his approach mean-risk ratios, which
are also called RORAC (return on adjusted risk capital) ratios. These
are defined by $RORAC_X:U\rightarrow\mathbb{R}$,
\begin{equation}\label{eq:mean-risk-ratio}
RORAC_X(u):=\frac{\mathbb{E}_X(u)}{\rho_X(u)},
\end{equation}
where in \citet{Tasche2004} irrational portfolios ($u\in U$ such that
$\mathbb{E}_X(u)\leq 0$ and $\rho_X(u)>0$) and arbitrage portfolios
($u\in U$ such that $\mathbb{E}_X(u)>0$ and $\rho_X(u)\leq 0$) are
completely excluded from the approach. In this setting with mean-risk
ratios as  measures of performance \citet{Tasche2004} provides the
following definition of suitability of capital allocations for
performance measurement.
\begin{definition}[see \cite{Tasche2004}, Definition 3]\label{def:Tasche}
We say that a vector field $k:U\rightarrow\mathbb{R}^n$ is called suitable for
performance measurement with $\rho_X$ if it satisfies the following
conditions:
\begin{itemize}
\item[(i)] For all $m\in\mathbb{R}^n$, $u\in U$ and
    $i\in\{1,\ldots,n\}$ the inequality
\begin{equation}\label{eq:Tasche1}
m_i \rho_X(u) > k_i(u) \mathbb{E}_X(u)
\end{equation}
implies that there is an $\varepsilon>0$ such that for all
$h\in(0,\varepsilon)$ we have
\begin{equation}\label{eq:Tasche2}
RORAC_X(u+he_i)>RORAC_X(u)>RORAC_X(u-he_i).
\end{equation}
\item[(ii)] For all $m\in\mathbb{R}^n$, $u\in U$ and
    $i\in\{1,\ldots,n\}$ the inequality
\begin{equation}\label{eq:Tasche3}
m_i \rho_X(u) < k_i(u) \mathbb{E}_X(u)
\end{equation}
implies that there is an $\varepsilon>0$ such that for all
$h\in(0,\varepsilon)$ we have
\begin{equation}\label{eq:Tasche4}
RORAC_X(u+he_i)<RORAC_X(u)<RORAC_X(u-he_i).
\end{equation}
\end{itemize}
\end{definition}
Note the strong requirement that inequalities \eqref{eq:Tasche1} and
\eqref{eq:Tasche3} have to be true for all $m\in\mathbb{R}^n$.

With this definition of suitability for performance measurement
\citet{Tasche2004} provides the following existence and uniqueness
result.

\begin{theorem}[see \citet{Tasche2004}, Theorem 1]\label{thm:Tasche}
Let $\rho_X:U\rightarrow\mathbb{R}$ be a function that is partially
differentiable in $U$ with continuous derivatives. Let
$k=(k_1,\ldots,k_n):U\rightarrow\mathbb{R}^n$ be a continuous vector
field. Then the vector field $k$ is suitable for performance
measurement with $\rho_X$ if and only if
\begin{equation}
k_i(u)=\frac{\partial}{\partial u_i}\rho_X(u),\quad
i=1,\ldots,n,\, u\in U.
\end{equation}
\end{theorem}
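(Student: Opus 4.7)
The plan is to reduce both directions of the equivalence to a single sign-comparison identity. First I would use the quotient rule---justified by the assumed continuous partial differentiability of $\rho_X$ together with the linearity of $\mathbb{E}_X(u)=\sum_j u_j m_j$---to compute
\[
\frac{\partial\, RORAC_X}{\partial u_i}(u)
\;=\;
\frac{m_i\,\rho_X(u)-\mathbb{E}_X(u)\,\frac{\partial\rho_X}{\partial u_i}(u)}{\rho_X(u)^2}.
\]
Thus, up to the positive factor $\rho_X(u)^2$, the derivative of $RORAC_X$ along $e_i$ has the sign of $m_i\rho_X(u)-\mathbb{E}_X(u)\,\partial\rho_X(u)/\partial u_i$, whereas hypotheses \eqref{eq:Tasche1} and \eqref{eq:Tasche3} concern the sign of $m_i\rho_X(u)-k_i(u)\,\mathbb{E}_X(u)$. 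The theorem then reduces to the statement that these two quantities agree in sign for every $m\in\mathbb{R}^n$ and every $u\in U$ if and only if $k_i(u)=\partial\rho_X(u)/\partial u_i$.

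For the ``if'' direction, assume $k_i(u)=\partial\rho_X(u)/\partial u_i$. Given $m,u,i$ satisfying \eqref{eq:Tasche1}, the identity above yields $(\partial\, RORAC_X/\partial u_i)(u)>0$. Since $\partial\rho_X/\partial u_i$ and $\rho_X$ are continuous, this partial derivative is a continuous function of $u$, so it stays strictly positive on some one-dimensional segment $\{u+he_i:h\in(-\varepsilon,\varepsilon)\}\subset U$. Therefore $h\mapsto RORAC_X(u+he_i)$ is strictly increasing there, which is exactly \eqref{eq:Tasche2}. The implication from \eqref{eq:Tasche3} to \eqref{eq:Tasche4} follows by the same argument with the signs reversed.

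For the ``only if'' direction, suppose $k$ is suitable and continuous but $k_i(u_0)\neq\partial\rho_X(u_0)/\partial u_i$ at some $u_0\in U$. Without loss of generality let $a:=k_i(u_0)>\partial\rho_X(u_0)/\partial u_i=:b$, and by continuity the strict inequality $k_i(u)>\partial\rho_X(u)/\partial u_i$ persists on a neighborhood $V\subset U$ of $u_0$. The strategy is to find $u\in V$ and $m\in\mathbb{R}^n$ such that, writing $a':=k_i(u)$ and $b':=\partial\rho_X(u)/\partial u_i$,
\[
b'\,\mathbb{E}_X(u)\;<\;m_i\rho_X(u)\;<\;a'\,\mathbb{E}_X(u).
\]
By the sign identity the left inequality forces $(\partial\, RORAC_X/\partial u_i)(u)>0$, while the right inequality activates hypothesis \eqref{eq:Tasche3}; Definition \ref{def:Tasche}(ii) then requires $RORAC_X$ to strictly decrease along $e_i$ near $u$ and hence $(\partial\, RORAC_X/\partial u_i)(u)\leq 0$---a contradiction. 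To exhibit such an $m$, I would first fix the components $m_j$ for $j\neq i$ so that $C:=\sum_{j\neq i}u_j m_j$ has a prescribed sign and sufficient magnitude, thereby steering $\mathbb{E}_X(u)=C+u_i m_i$ essentially independently of $m_i$, and then pick $m_i$ inside the non-empty open interval determined by the two strict inequalities (which is non-empty precisely because $a'>b'$). The main obstacle is the handling of degenerate configurations---most notably $\rho_X(u)=0$, $\mathbb{E}_X(u)=0$, or $u\in\mathbb{R}e_i$---in which the two affine forms $m\mapsto m_i\rho_X(u)-a'\mathbb{E}_X(u)$ and $m\mapsto m_i\rho_X(u)-b'\mathbb{E}_X(u)$ become proportional and no admissible $m$ exists; these cases are avoided by perturbing $u$ slightly within $V$, exploiting the openness of $U$, the continuity of $\rho_X$ and $k$, and the persistence of the strict inequality $a'>b'$.
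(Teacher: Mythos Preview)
The paper does not actually prove Theorem~\ref{thm:Tasche}; it is stated as background, attributed to \cite{Tasche2004}. The closest thing to a proof in the paper is that of the generalization, Theorem~\ref{thm:grad_alloc_ex_and_uq}. Your ``if'' direction matches that proof exactly: compute $\partial\alpha_X/\partial u_i$ by the quotient rule and read off the sign.

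For the ``only if'' direction the paper's argument in Theorem~\ref{thm:grad_alloc_ex_and_uq}(b) is different and much shorter: there one is free to pick \emph{any} partially differentiable reward $\theta_X$, so the paper sets $\theta_X^t:=t\rho_X$, making $\alpha_X\equiv t$ constant; if $k_i(u)\neq\partial\rho_X(u)/\partial u_i$ the suitability hypothesis nonetheless forces strict monotonicity of $\alpha_X$ along $e_i$, an immediate contradiction. This trick does not transfer to Theorem~\ref{thm:Tasche} because in Tasche's setting the reward is constrained to be linear, $\mathbb{E}_X(u)=\sum_j u_jm_j$, and $\rho_X$ need not be. So exploiting the quantifier ``for all $m\in\mathbb{R}^n$'', as you do, is indeed the natural route here.

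Your sandwich idea is sound in principle: with $u$ fixed, the two conditions $b'\mathbb{E}_X(u)<m_i\rho_X(u)$ and $m_i\rho_X(u)<a'\mathbb{E}_X(u)$ are open half-spaces in $m$, and when $\rho_X(u)\neq 0$ and some $u_j\neq 0$ with $j\neq i$ their normals are linearly independent (precisely because $a'\neq b'$), so the intersection is nonempty. What is not complete is the handling of the degenerate cases you flag. You should make explicit that $\rho_X\neq 0$ on $U$ is implicit in the well-posedness of $RORAC_X$, and that for $n\geq 2$ the openness of $V$ lets you perturb off the line $\mathbb{R}e_i$ while preserving $k_i(u)>\partial\rho_X(u)/\partial u_i$; the case $n=1$ needs a separate (trivial) argument. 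As written your uniqueness argument is a correct outline rather than a finished proof.
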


If we take a closer look at the definition of suitability for
performance measurement from \cite{Tasche2004} we see that a certain
type of reward capital allocation for subportfolio $X_i$ is fixed,
namely $m_i=\mathbb{E}[X_i]$ in \eqref{eq:Tasche1} and
\eqref{eq:Tasche3}, whereas the risk capital allocation
$k=(k_1,\ldots,k_n)$ is in the focus of investigation. This
investigation then proves Theorem \ref{thm:Tasche}.

Our aim is to generalize the suitability framework of
\cite{Tasche2004} to a reward-risk setting with a more general
functional $\theta:L^p\to \mathbb{R}\cup\{-\infty\}$, $1\leq p \leq
\infty$, instead of the expectation. But now a problem arises since
Definition \ref{def:Tasche} can be interpreted in two ways. One way
of reading Definition \ref{def:Tasche} is to consider
\begin{equation} \label{eq:sec2:indiv}
t_i=\theta(X_i)=\mathbb{E}[X_i]
\end{equation}
to be the reward capital allocation and thus to compare the standalone
performance of $X_i$ with the performance of the portfolio
$X=\sum_{i=1}^n X_i$ by considering the ratios $\theta(X_i)/k_i$ and
$\theta(X)/\rho(X)$  in Definition \ref{def:Tasche}. Since
$\theta(X)=\mathbb{E}[X]$ is G{\^a}teaux-differentiable there arises
another interpretation of Definition \ref{def:Tasche}, namely to
consider the G{\^a}teaux-derivative of
$\theta(\cdot)=\mathbb{E}[\cdot]$ at $X$ in the direction of $X_i$,
\begin{equation}\label{eq:sec2:gat}
t_i=\lim_{h_i\to 0}\frac{\mathbb{E}[X+h_iX_i]-\mathbb{E}[X]}{h_i}
=
\mathbb{E}[X_i]
\end{equation}
As we can see in \eqref{eq:sec2:indiv} and \eqref{eq:sec2:gat}  both
concepts coincide if we consider $\theta(X)=\mathbb{E}[X]$ as a reward
measure. We will study both approaches in the following section and
show that we can generalize Tasche's result in both cases. The case
$t_i=\theta(X_i)=\mathbb{E}[X_i]$, which corresponds to the
individual reward capital allocation, will require certain structural
properties of the functional $\theta$. This will be studied at the
beginning of Section \ref{sec:suit}. The G{\^a}teaux-derivative case
will be separately studied in the Subsections \ref{subsec:3-1} and
\ref{subsec:partially_diff_rrr}, where we will provide the direct
extension of Tasche's result in Subsection
\ref{subsec:partially_diff_rrr}. In Subsection  \ref{subsec:3-1} we
won't fix any specific reward capital allocation. Instead we will
focus on sufficient properties of reward-risk allocations such that
the allocation is suitable for performance measurement.

In contrast to \citet{Tasche2004} we will consider reward-risk ratios
that were recently introduced in \cite{Cheridito2012} as measures of
performance. This leads to the next section.

\section{Suitability for performance measurement with RRRs}\label{sec:suit}
In this section we will examine the connection of reward-risk ratios to
capital allocations that are suitable for performance measurement. Let
us first introduce reward-risk ratios.

As performance measures in this work we consider reward-risk ratios
(RRRs in short) of the form
\begin{equation}\label{eq:def:reward-risk-ratio}
\alpha(X)=
\begin{cases}
0, & \mbox{if }\theta(X)\leq0\mbox{ and }\rho(X)>0,\\
+\infty, & \mbox{if }\theta(X)>0\mbox{ and }\rho(X)\leq0,\\
\frac{\theta(X)}{\rho(X)}, & \mbox{else}.
\end {cases}
\end{equation}
This means that we will measure the performance of the aggregate
position $X$ by a ratio with a risk measure $\rho: L^p \rightarrow
\mathbb{R}\cup \{+\infty\}$ in the denominator and a reward measure
$\theta: L^p \rightarrow \mathbb{R}\cup \{-\infty\}$ in the
numerator. Hereby $0/0$ and $\infty/\infty$ are understood to be $0$.
A reward measure $\theta: L^p \rightarrow \mathbb{R}\cup \{-\infty\}$
in the setting of this section is a functional with property (P) and property
\begin{description}
  \item[($\mathbf{\overline{S}}$)] \emph{Superadditivity:}
      $\theta(X+Y)\geq\theta(X)+\theta(Y)$ for all $X,Y\in L^p$,
\end{description}
Classes of reward measures with these properties were recently
studied in \citet{Cheridito2012} as robust reward measures and
distorted reward measures. We will shortly repeat their definition
here. Let $\mathcal{P}$ be a non-empty set of probability measures that
are absolutely continuous with respect to $\mathbb{P}$. The robust
reward measure is defined by
\begin{equation}\label{eq:robustR}
  \theta(X)
  =
  \inf_{Q\in \mathcal{P}}\mathbb{E}_Q[X].
\end{equation}
$\theta$ defined in \eqref{eq:robustR} obviously satisfies
($\overline{S}$) and (P). Through the set $\mathcal{P}$ we can
introduce ambiguity such that the elements of the set $\mathcal{P}$
could describe the beliefs of different traders. This allows a much
more general approach to the performance measurement framework than
simply choosing the expectation for the numerator of the reward-risk
ratio. A subclass of \eqref{eq:robustR} are distorted reward measures.
Let $\varphi:[0,1]\rightarrow [0,1]$ be a non-decreasing function
satisfying $\varphi(0)=0$ and $\varphi(1)=1$. It induces the distorted
probability $\mathbb{P}_{\varphi}[A]:=\varphi\circ\mathbb{P}[A]$.  If
$\varphi$ is convex then the Choquet integral defined by
\begin{equation}\label{eq:Choquet}
  \mathbb{E}_{\varphi}[X]
  :=
  \int_0^{\infty} \mathbb{P}_{\varphi}[X>t]dt
  +
  \int_{-\infty}^0 (\mathbb{P}_{\varphi}[X>t]-1)dt,
\end{equation}
is superadditive and positively homogeneous. Thus it satisfies
($\overline{S}$) and (P). With different choices for $\varphi$ we can
model different attitudes of agents towards rewards. For additional
information about RRRs induced by coherent or convex risk measures in
conjunction with these reward measures we refer to
\cite{Cheridito2012}. Most important for us in this work are the
properties ($\overline{S}$) and (P) and the existence of interesting
classes of reward measures with these properties. These classes
obviously contain the expectation. Thus all statements in the following
sections are also true for mean-risk ratios (see e.g.
\eqref{eq:mean-risk-ratio}) as performance measures.

Now we will examine the connection of reward-risk-ratios to capital
allocations that are suitable for performance measurement. The
following definition will clarify what we mean by suitability for
performance measurement with reward-risk ratios of type
\eqref{eq:def:reward-risk-ratio}.

\begin{definition}\label{def:suitability_G}
A risk capital allocation $k\in \mathbb{R}^n$  is suitable for
performance measurement at $X\in \text{dom\,}\rho$ with a
reward-risk-ratio $\alpha$, if for any decomposition of
$X=\sum_{i=1}^n X_i$ it satisfies the following conditions:
\begin{enumerate}
\item If $\theta(X)\geq 0$ then for any $i\in\{1,\ldots,n\}$ there
    exists $\varepsilon_i>0$ such
 that for all $h_{i}\in(0,\varepsilon_i)$
\begin{equation}\label{eq:SUIT_G1}
\frac{\theta(X_i)}{k_i}\geq \frac{\theta(X)}{\rho(X)}
\end{equation}
implies
\begin{equation}\label{eq:SUIT_G2}
\alpha(X) \geq \alpha(X-h_{i} X_{i})
\end{equation}
and
\begin{equation}\label{eq:SUIT_G3}
\frac{\theta(X_i)}{k_i}\leq \frac{\theta(X)}{\rho(X)}
\end{equation}
implies
\begin{equation}\label{eq:SUIT_G4}
\alpha(X) \geq \alpha(X+h_{i} X_{i})
\end{equation}
\item If $\theta(X)\leq 0$ then for any $i\in\{1,\ldots,n\}$ there
    exists $\varepsilon_i>0$ such that for all
    $h_{i}\in(0,\varepsilon_i)$ \eqref{eq:SUIT_G1} implies
\begin{equation}\label{eq:SUIT_G6}
 \alpha(X+h_{i} X_{i}) \geq \alpha(X)
\end{equation}%
and \eqref{eq:SUIT_G3} implies
\begin{equation}\label{eq:SUIT_G8}
 \alpha(X- h_{i} X_{i}) \geq \alpha(X)
\end{equation}%
\end{enumerate}%
\end{definition}

The motivation behind this definition of suitability for performance
measurement is similar to the one in \cite{Tasche2004} and
\cite{Tasche2008}. Evidently, \eqref{eq:SUIT_G1} and \eqref{eq:SUIT_G3}
relate the performance of the aggregate position $X$, measured by
$\alpha(X)$, to the performance of each unit $X_i$, $i=1,\ldots,n$,
which is involved in generating the risk capital $\rho(X)$. The
performance of each unit $X_i$, $i=1,\ldots,n$, is measured by
$\theta(X_i)/k_i$ (assume for the sake of simplicity that $k_i>0$ and
$\rho(X)>0$) and only those capital allocations $k=(k_1,\ldots,k_n)$
are suitable for performance measurement that send the right signals to
the agent. The right signals in this context are represented by
\eqref{eq:SUIT_G2} and \eqref{eq:SUIT_G4} (in connection to
\eqref{eq:SUIT_G1} and \eqref{eq:SUIT_G3}) and imply not to reduce the
capital of position $X_i$ if the standalone performance of $X_i$ is
better than the performance of the aggregate position $X$. On the other
hand a worse standalone performance of $X_i$ should send the signal not
to invest additional capital in this position. Summarized, this means
\begin{itemize}
\item 
$\alpha(X)\geq \alpha(X-h_iX_i)$ do not reduce the capital in
position $X_i$

\item 
$\alpha(X)\geq\alpha(X+h_iX_i)$ do not increase the capital in
position $X_i$

\item 
$\alpha(X+h_iX_i)\geq \alpha(X)$ increase the capital in  position
$X_i$

\item 
$\alpha(X-h_iX_i)\geq \alpha(X)$ reduce the capital in  position
$X_i$
\end{itemize}

We will see in Subsection \ref{subsec:partially_diff_rrr} that there is
a stronger formulation of suitability for performance measurement, but
it also requires much stronger additional properties of the underlying
risk measure to guarantee the existence of suitable capital
allocations.

For further statements we will need the following technical assumption
on the underlying risk measure.
\begin{description}
\item[(A)]

For an aggregate position $X\in \text{dom\,}\rho$ and any
decomposition $(X_1,\ldots,X_n)$ of $X=\sum_{i=1}^n X_i$ there
exists $\varepsilon_i>0$ such that for all
$h_i\in(0,\varepsilon_i)$ we have
\begin{itemize}
\item
if $\rho(X)>0$ then $\rho(X-h_iX_i)>0$ and $\rho(X+h_iX_i)>0$,
\item
if $\rho(X)<0$ then $\rho(X-h_iX_i)<0$ and $\rho(X+h_iX_i)<0$.
\end{itemize}
for any $i\in \{1,\ldots,n\}$.
\end{description}
It should be noted that convex risk measures that are continuous at
$X\in\text{dom\,}\rho$, automatically satisfy assumption (A) at $X$.

Now we can examine which properties a capital allocation should have
to be suitable for performance measurement with a reward-risk ratio
$\alpha$ at $X\in \text{dom\,}\rho$.
\begin{theorem}\label{prop:suitability}
Let $\theta:L^p\rightarrow\mathbb{R}\cup \{-\infty\}$, $1\leq p \leq
\infty$, be a functional with the properties ($\bar{S}$) and (P). Let $\rho:L^p\rightarrow
\mathbb{R}\cup\{+\infty\}$, $1\leq p \leq \infty$, be a map that
satisfies condition (A) in $X\in \text{dom\,}\rho$ and let $\alpha$
be a reward risk ratio of type \eqref{eq:def:reward-risk-ratio}. Then
a risk capital allocation $k\in \mathbb{R}^n$ that satisfies the
properties
\begin{align}
h_i k_i
& \leq \rho(X+h_i X_i)-\rho(X)\quad \text{and} \label{eq:prop_alloc_1}\\
h_i k_i
& \geq \rho(X) - \rho(X-h_i X_i)\label{eq:prop_alloc_2}
\end{align}
for all $h_i\in(0,\varepsilon_i)$, $\varepsilon_i$ from (A), $1\leq i
\leq n$, and for any decomposition $(X_1,\ldots,X_n)$ of
$X=\sum_{i=1}^n X_i$ is suitable for performance measurement with
$\alpha$ at $X$.
\end{theorem}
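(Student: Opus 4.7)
The plan is to verify each of the four implications in Definition~\ref{def:suitability_G} directly by combining the two defining inequalities \eqref{eq:prop_alloc_1}--\eqref{eq:prop_alloc_2} for $k_i$ with superadditivity and positive homogeneity of $\theta$. Since $\theta$ is superadditive and $h_i>0$, I get both bounds
\[
\theta(X+h_iX_i)\ \geq\ \theta(X)+h_i\theta(X_i),
\qquad
\theta(X-h_iX_i)\ \leq\ \theta(X)-h_i\theta(X_i),
\]
the first from (P) and $(\overline{S})$ applied to $X$ and $h_iX_i$, the second by writing $\theta(X)\geq\theta(X-h_iX_i)+h_i\theta(X_i)$. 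Combined with
\[
\rho(X+h_iX_i)\ \geq\ \rho(X)+h_ik_i,
\qquad
\rho(X-h_iX_i)\ \geq\ \rho(X)-h_ik_i,
\]
these are the only structural facts I need.

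Next I handle the principal case $\theta(X)\geq 0$, $\rho(X)>0$ with, say, $k_i>0$, proving the implication $\theta(X_i)/k_i\geq\theta(X)/\rho(X)\Rightarrow\alpha(X)\geq\alpha(X-h_iX_i)$. Assumption (A) keeps $\rho(X-h_iX_i)>0$ for $h_i\in(0,\varepsilon_i)$, so both ratios $\alpha$ are classical quotients and it suffices to cross-multiply. The hypothesis gives $\theta(X_i)\rho(X)\geq k_i\theta(X)$, and
\[
\theta(X)\,\rho(X-h_iX_i)-\theta(X-h_iX_i)\,\rho(X)
\ \geq\ \theta(X)\bigl(\rho(X)-h_ik_i\bigr)-\bigl(\theta(X)-h_i\theta(X_i)\bigr)\rho(X)
= h_i\bigl(\theta(X_i)\rho(X)-k_i\theta(X)\bigr)\ \geq\ 0,
\]
where $\theta(X)\geq 0$ is used to preserve the direction of the first estimate. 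Dividing by the positive product of the $\rho$'s yields $\alpha(X)\geq\alpha(X-h_iX_i)$. The companion implication $\theta(X_i)/k_i\leq\theta(X)/\rho(X)\Rightarrow\alpha(X)\geq\alpha(X+h_iX_i)$ is symmetric, using the lower bound on $\rho(X+h_iX_i)$ and the lower bound on $\theta(X+h_iX_i)$. The case $\theta(X)\leq 0$ (Conditions \eqref{eq:SUIT_G6}--\eqref{eq:SUIT_G8}) is treated analogously, with the inequalities reversing direction because $\theta(X)$ now enters on the other side.

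The remaining work is bookkeeping over signs. I would enumerate sub-cases according to the sign of $\rho(X)$ (positive, negative, or zero) and the sign of $k_i$; (A) guarantees that the sign of $\rho(X\pm h_iX_i)$ is inherited from $\rho(X)$, which keeps the cross-multiplication step valid and, when $\rho(X)\leq 0$, routes the result through the special cases in \eqref{eq:def:reward-risk-ratio} where $\alpha$ is declared $0$ or $+\infty$. Checking that the conventions $0/0=\infty/\infty=0$ and the boundary assignments are consistent with the strict superadditivity/allocation inequalities is the only place where care is really required.

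I expect the case analysis over signs to be the main obstacle, not the analytic content: the numerator/denominator sign of the ratio $\theta(X_i)/k_i$ depends on the sign of $k_i$, so ``cross-multiplying'' the hypothesis must be done with the correct inequality direction, and similarly $\rho(X)\leq 0$ forces one to argue via the conventional values of $\alpha$ rather than the formula $\theta/\rho$. Beyond this clerical care, the proof is a direct one-line estimate per case, built entirely from $(\overline{S})$, (P), and \eqref{eq:prop_alloc_1}--\eqref{eq:prop_alloc_2}.
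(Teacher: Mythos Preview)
Your proposal is correct and follows essentially the same route as the paper: in the principal case you bound $\theta(X\pm h_iX_i)$ via $(\overline{S})$ and (P), bound $\rho(X\pm h_iX_i)$ via \eqref{eq:prop_alloc_1}--\eqref{eq:prop_alloc_2}, and cross-multiply using the sign control on $\rho$ supplied by (A), which is exactly the paper's argument. If anything, you are slightly more explicit than the paper about the sign of $k_i$ when clearing the denominator in \eqref{eq:SUIT_G1}/\eqref{eq:SUIT_G3}; the paper leaves that step implicit.
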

\begin{proof}
Consider first the case where numerator and denominator have the same
signs. Let $\theta(X)\geq 0$. With assumption (A) for $\rho$ at
$X\in\text{dom\,}\rho$ there exists $\varepsilon_i>0$ such that for
all $h_i\in (0,\varepsilon_i)$ we have $\rho(X)\rho(X+h_i X_i)>0$ and
$\rho(X)\rho(X-h_i X_i)>0$ for any $i\in\{1,\ldots,n\}$. From
($\overline{S}$) and (P) we get $\theta(X-h_iX_i)\leq
\theta(X)-h_i\theta(X_i)$. This, together with \eqref{eq:SUIT_G1} and
\eqref{eq:prop_alloc_2} leads to
\begin{align*}
 & \theta(X-h_{i}X_{i})\rho(X)-\theta(X)\rho(X-h_{i}X_{i})\\
\leq & \theta(X)(\rho(X)-h_i k_i-\rho(X-h_{i}X_{i}))\leq 0,
\end{align*}
which immediately implies $\alpha(X)\geq\alpha(X-h_iX_i)$. By
analogous steps we get from \eqref{eq:SUIT_G3} and
\eqref{eq:prop_alloc_1} the property $\alpha(X)\geq\alpha(X+h_iX_i)$.

Now let $\theta(X)\leq 0$. Then again with assumption (A) for $\rho$
in $X\in\text{dom\,}\rho$ there exists $\varepsilon_i>0$ such that
for all $h_i\in (0,\varepsilon_i)$ we have $\rho(X)\rho(X+h_i X_i)>0$
and $\rho(X)\rho(X-h_i X_i)>0$ for any $i\in\{1,\ldots,n\}$. From
($\overline{S}$), (P), \eqref{eq:SUIT_G1} and \eqref{eq:prop_alloc_1}
we get
\begin{align*}
 & \theta(X+h_{i}X_{i})\rho(X)-\theta(X)\rho(X+h_{i}X_{i})\\
\geq & \theta(X)(\rho(X)+h_i k_i-\rho(X+h_{i}X_{i}))\geq 0,
\end{align*}
which leads to $\alpha(X)\leq\alpha(X+h_iX_i)$. By analogous steps we
get from \eqref{eq:SUIT_G3} and \eqref{eq:prop_alloc_2} the property
$\alpha(X)\leq\alpha(X-h_iX_i)$.

Let us now consider the case with $\theta(X)\leq 0$ and $\rho(X)>0$.
Here we have $\alpha(X)=0$ and from
\eqref{eq:SUIT_G1} it follows that
\[
\alpha(X+h_i X_i)
\geq
\alpha(X) \frac{\rho(X)+h_i k_i}{\rho(X+h_i X_i)}=0.
\]
 From \eqref{eq:SUIT_G3} it
follows that
\[
\alpha(X-h_i X_i)
\geq
\alpha (X)
\frac{\rho(X)-h_i k_i}{\rho(X-h_i X_i)}
=0.
\]

The remaining relevant case, in which $\theta(X)>0$ and $\rho(X)<0$,
leads to $\alpha(X)=+\infty$  and since $\alpha(X+h_i X_i)\leq +\infty$
and $\alpha(X-h_i X_i)\leq + \infty$ we have finished the proof.
\end{proof}

In Theorem \ref{prop:suitability}, except for condition (A), we didn't
require any further properties like coherency or convexity from the
risk measure $\rho$. We have only used the superadditivity and the
positive homogeneity of $\theta$, which are the properties
($\overline{S}$) and (P). The result nevertheless does not guarantee
the existence of allocations that satisfy the required properties of
this theorem. However it gives us a suitability-criterion, which is
easy to verify and thus gives us the opportunity to check whether a
capital allocation is suitable for performance measurement or not,
without requiring the risk measure to have any properties except (A).
For instance we can immediately see that the with-without allocation
does not satisfy both of the properties \eqref{eq:prop_alloc_1} and
\eqref{eq:prop_alloc_2}. For the following corollary that tells us
which capital allocations satisfy \eqref{eq:prop_alloc_1} and
\eqref{eq:prop_alloc_2} and thus are suitable for performance
measurement  we will need the nonemptiness of the subdifferential of
$\rho$ at $X$.

\begin{corollary}\label{cor:subgr}
Let $\theta:L^p \rightarrow \mathbb{R}\cup\{-\infty\}$, $1\leq p \leq
\infty$, be a be a functional with the properties ($\bar{S}$) and (P). Let $\rho:L^p\rightarrow
\mathbb{R}\cup\{+\infty\}$, $1\leq p \leq \infty$, be a convex risk
measure with nonempty subdifferential at $X\in \text{dom\,}\rho$. Let
$\alpha$ be a reward-risk ratio of type
\eqref{eq:def:reward-risk-ratio}. If (A) is true for $\rho$ at $X \in
\text{dom\,}\rho$ then for any $\xi\in\partial \rho(X)$ the
subgradient risk capital allocation $k\in \mathbb{R}^n$ defined by
\begin{equation}\label{eq:SGA}
k_i:=\mathbb{E}[\xi X_i],
\quad \text{for any}\quad i\in\{1,\ldots,n\}
\end{equation}
is suitable for performance measurement with $\alpha$ at $X$.
\end{corollary}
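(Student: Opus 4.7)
The plan is to verify that the subgradient allocation $k_i := \mathbb{E}[\xi X_i]$ satisfies the two hypotheses \eqref{eq:prop_alloc_1} and \eqref{eq:prop_alloc_2} of Theorem \ref{prop:suitability}, and then invoke that theorem directly. Since condition (A) is assumed and $\alpha$ is already a reward-risk ratio of the required type, these are the only inequalities left to check.

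First I would unfold the definition of the subdifferential given in \eqref{eq:subdiff} (or the weak*-version for $p=\infty$): for $\xi \in \partial \rho(X)$ and every $Y \in L^p$ one has $\rho(X+Y) \geq \rho(X) + \mathbb{E}[\xi Y]$. Applied to $Y = h_i X_i$ with $h_i > 0$, the linearity of the expectation gives $\rho(X + h_i X_i) \geq \rho(X) + h_i \mathbb{E}[\xi X_i] = \rho(X) + h_i k_i$, which is exactly \eqref{eq:prop_alloc_1}. Applied instead to $Y = -h_i X_i$, one obtains $\rho(X - h_i X_i) \geq \rho(X) - h_i k_i$, which rearranges to \eqref{eq:prop_alloc_2}. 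Note that both inequalities actually hold for every $h_i > 0$, so certainly on any interval $(0,\varepsilon_i)$ supplied by condition (A), and for any decomposition $X = \sum_{i=1}^n X_i$, since the subgradient inequality is insensitive to how we picked the $X_i$'s.

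With \eqref{eq:prop_alloc_1} and \eqref{eq:prop_alloc_2} in hand, Theorem \ref{prop:suitability} applies verbatim to the allocation $k = (\mathbb{E}[\xi X_1], \ldots, \mathbb{E}[\xi X_n])$ and yields suitability for performance measurement with $\alpha$ at $X$. There is essentially no obstacle here: the corollary is a clean specialization in which the defining property of the subdifferential delivers both inequalities of Theorem \ref{prop:suitability} for free. The only minor subtleties worth flagging are (i) that for $p=\infty$ one uses the weak*-subdifferential, which is the variant guaranteed to give the same duality pairing $\mathbb{E}[\xi X_i]$ with $\xi \in L^1$, and (ii) that the convexity and subdifferentiability assumptions on $\rho$ are what guarantee the subgradient $\xi$ exists in the first place, but beyond that no further structural properties of $\rho$ are needed, exactly as in Theorem \ref{prop:suitability}.
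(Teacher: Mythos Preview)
Your proof is correct and follows essentially the same route as the paper: apply the subgradient inequality with $Y=h_iX_i$ and $Y=-h_iX_i$ to obtain \eqref{eq:prop_alloc_1} and \eqref{eq:prop_alloc_2}, then invoke Theorem~\ref{prop:suitability}. If anything, your write-up is slightly more detailed, making explicit the linearity of the expectation and the use of the weak*-subdifferential for $p=\infty$.
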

\begin{proof}
Since $\partial \rho(X)$ is nonempty there exits $\xi \in \partial
\rho(X)$ for $X\in \text{dom\,}\rho\subseteq L^p$, $1\leq p \leq
\infty$ such that $k_i=\mathbb{E}[\xi X_i]$, $i=1,\ldots,n$,
obviously satisfies \eqref{eq:prop_alloc_1} for any decomposition of
$X$. Furthermore for any decomposition $(X_1,\ldots,X_n)$ of
$X=\sum_{i=1}^n X_i$ it satisfies
\[
\rho(X-h_i X_i) - \rho(X)
\geq
\mathbb{E}[\xi (-h_iX_i)]			
\]
for any $h_i\in (0,\varepsilon_i)$, $i=1,\ldots,n$. This is
equivalent to \eqref{eq:prop_alloc_2}. Thus with Theorem
\ref{prop:suitability} for any $\xi\in\partial \rho(X)$ any
subgradient capital allocation $k_i=\mathbb{E}[\xi X_i]$,
$\xi\in\partial\rho(X)$, $i=1,\ldots,n$, is suitable for performance
measurement with $\alpha$ at $X$.
\end{proof}

As outlined in Section \ref{sec:CA} we know that convex risk measures
on $L^p$, $1\leq p <\infty$, are continuous and subdifferentiable on
the interior of their domain. Thus if the aggregate position $X$ is
from the interior of the domain of $\rho$ condition (A) and the
nonemptiness of the subdifferential of $\rho$ at $X$ are
automatically satisfied. This leads to the next corollary.

\begin{corollary}
Let $\theta:L^p \rightarrow \mathbb{R}\cup\{-\infty\}$, $1\leq p <\infty$ be a functional with the properties ($\bar{S}$) and (P). Let $\rho:L^p\rightarrow \mathbb{R}\cup\{+\infty\}$, $1\leq p <\infty$ be a
convex risk measure and let $\alpha$ be a reward-risk ratio of type
\eqref{eq:def:reward-risk-ratio}. If the aggregate position
$X=\sum_{i=1}^n X_i$  is from the interior
of the domain of $\rho$ then the subdifferential of $\rho$ at $X$ is
nonempty and any subgradient risk capital allocation defined in
\eqref{eq:SGA} is suitable for performance measurement with $\alpha$
at $X$.
\end{corollary}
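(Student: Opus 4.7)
The plan is to reduce everything to the previous Corollary \ref{cor:subgr} by verifying its two hypotheses (nonemptiness of $\partial\rho(X)$ and condition (A)) for $X$ in the interior of $\text{dom\,}\rho$. Both of these follow from general facts about convex risk measures already cited in the excerpt, so essentially no new analytic work should be required.

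First I would recall the result from \cite{Ruszczynski2006} (Proposition 3.1), mentioned just after the definition of the subdifferential in Section \ref{sec:CA}: a proper convex risk measure on $L^p$ with $1 \leq p < \infty$ is continuous and subdifferentiable at every point of the interior of its domain. Applied to $X$, this directly yields $\partial\rho(X)\neq\emptyset$, which is the first hypothesis of Corollary \ref{cor:subgr}.

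Next I would verify condition (A). The continuity statement just recalled, together with the fact that for each decomposition $(X_1,\ldots,X_n)$ of $X$ and each $i$ the map $h_i \mapsto \rho(X \pm h_i X_i)$ is continuous at $h_i=0$ (since for small $h_i$ the perturbed positions $X\pm h_iX_i$ remain in the interior of $\text{dom\,}\rho$), gives that $\rho(X\pm h_iX_i)\to \rho(X)$ as $h_i\to 0^+$. Consequently, if $\rho(X)>0$ there exists $\varepsilon_i>0$ such that $\rho(X\pm h_iX_i)>0$ for all $h_i\in(0,\varepsilon_i)$, and analogously if $\rho(X)<0$. This is exactly condition (A). (This observation is essentially the remark made after condition (A) in the body of the paper, that continuity of $\rho$ at $X$ implies (A).)

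Having verified both hypotheses, I would invoke Corollary \ref{cor:subgr}: for every $\xi\in\partial\rho(X)$ the subgradient allocation $k_i=\mathbb{E}[\xi X_i]$ defined in \eqref{eq:SGA} is suitable for performance measurement with $\alpha$ at $X$. There is no substantive obstacle here; the entire proof is a bookkeeping exercise that combines the cited continuity/subdifferentiability result with the already proved Corollary \ref{cor:subgr}, so the write-up can be kept to a few lines.
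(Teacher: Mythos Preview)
Your proposal is correct and follows exactly the approach the paper indicates in the paragraph preceding the corollary: use the Ruszczy\'nski--Shapiro result to get continuity and subdifferentiability of $\rho$ on the interior of its domain, deduce condition (A) from continuity, and then apply Corollary \ref{cor:subgr}. The paper does not even write out a separate proof, since this reasoning is spelled out in the connecting text; your write-up makes the same deductions explicit.
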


If we directly assume that the underlying convex risk measure is
continuous at the aggregate position $X$, then assumption (A) is
satisfied, the subdifferential of $\rho$ at $X$ is nonempty and thus
the subgradient allocation exists and is suitable for performance
measurement with a reward-risk ratio $\alpha$. Since the continuity
of a convex risk measure $\rho$ on $L^p$ follows from the finiteness
of $\rho$ on $L^p$, $1\leq p \leq \infty$ we can formulate the
following corollary.

\begin{corollary}
Let $\theta:L^p \rightarrow \mathbb{R}$, $1\leq p < \infty$ be a functional with the properties ($\bar{S}$) and (P). Let
$\rho:L^p\rightarrow \mathbb{R}$, $1\leq p < \infty$ be a convex risk measure and let
$\alpha$ be a reward-risk ratio of type
\eqref{eq:def:reward-risk-ratio}. Then the subdifferential of $\rho$
at $X$ is nonempty for any $X\in L^p$ and any
subgradient risk capital allocation is suitable for performance
measurement with $\alpha$ at any $X\in L^p$.
\end{corollary}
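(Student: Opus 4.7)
The plan is to read this corollary as a direct specialization of the previous corollary to the finite-valued setting, so essentially nothing new has to be proved; one only has to verify the two hypotheses of Corollary \ref{cor:subgr}, namely condition (A) at $X$ and nonemptiness of $\partial\rho(X)$, for an arbitrary $X\in L^p$.

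First I would invoke the extended Namioka--Klee theorem cited in Section \ref{sec:CA}: since $\rho:L^p\rightarrow\mathbb{R}$ is a \emph{finite-valued} convex risk measure on $L^p$ with $1\leq p<\infty$, it is continuous on all of $L^p$. In particular, $\text{dom\,}\rho=L^p$ and every $X\in L^p$ lies in the (norm) interior of the domain. Continuity at $X$ together with the assumption $\rho(X)\neq 0$ (more precisely, once $\rho(X)>0$ or $\rho(X)<0$, continuity gives a neighborhood on which $\rho$ keeps the same sign) immediately yields condition (A) at $X$; this is exactly the remark made right after the formulation of (A) in the paper.

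Next I would appeal to Proposition 3.1 of \cite{Ruszczynski2006} quoted in Section \ref{sec:CA}, according to which a convex risk measure is subdifferentiable on the interior of its domain. Since that interior is all of $L^p$, we get $\partial\rho(X)\neq\emptyset$ for every $X\in L^p$. Thus both hypotheses of Corollary \ref{cor:subgr} are satisfied at every $X\in L^p$.

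Having verified the two conditions, I would conclude by a one-line application of Corollary \ref{cor:subgr}: for any $\xi\in\partial\rho(X)$, the subgradient risk capital allocation $k_i=\mathbb{E}[\xi X_i]$ defined in \eqref{eq:SGA} is suitable for performance measurement with $\alpha$ at $X$. There is no genuine obstacle here; the only (minor) thing to be careful about is that the sign-preservation part of (A) requires $\rho(X)\neq 0$, but the reward-risk ratio \eqref{eq:def:reward-risk-ratio} treats the case $\rho(X)=0$ separately anyway, so the previous proof transfers without modification.
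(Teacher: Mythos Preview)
Your proposal is correct and follows exactly the route the paper intends: the corollary is stated without a separate proof, with the preceding sentence (``Since the continuity of a convex risk measure $\rho$ on $L^p$ follows from the finiteness of $\rho$ on $L^p$ \ldots'') serving as the entire argument, reducing everything to Corollary~\ref{cor:subgr} via continuity and subdifferentiability on the interior of the domain. Your write-up simply unpacks this one-line justification in more detail, invoking the same two cited results (the extended Namioka--Klee theorem and Proposition~3.1 of \cite{Ruszczynski2006}).
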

As already mentioned in Section \ref{sec:CA},
G\^ateaux-differentiability of a continuous $\rho$ at $X$ leads  to the uniqueness of the subgradient of $\rho$ at
$X$. In this case the G\^ateaux-derivative is the only element of the
subdifferential of $\rho$ at $X$ and thus the capital allocation
defined in \eqref{eq:gradient_allocation_2} is suitable for performance
measurement with a reward-risk ratio $\alpha$ at $X$. This allows us to
formulate the final corollary from Theorem \ref{prop:suitability}.
\begin{corollary}\label{cor:gradient-}
Let $\theta:L^p \rightarrow \mathbb{R}\cup\{-\infty\}$, $1\leq p\leq \infty$ be a functional with the properties ($\bar{S}$) and (P). Let $\alpha$ be a reward-risk ratio of type
\eqref{eq:def:reward-risk-ratio}. If the risk measure $\rho :L^p
\rightarrow \mathbb{R}\cup\{+\infty\}$, $1\leq p\leq \infty$ is continuous and  G{\^a}teaux-differentiable
at the aggregate position $X$, then the
gradient risk capital allocation, defined by
\begin{equation}
  k_i
  =
  \lim_{h_i\rightarrow 0}
  \frac{\rho(X+h_i X_i)-\rho(X)}{h_i},
  \quad
  i=1,\ldots,n,
\end{equation}
exists and is suitable for performance measurement with $\alpha$ at
$X$.
\end{corollary}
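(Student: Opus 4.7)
The plan is to deduce this corollary directly from Corollary \ref{cor:subgr} by showing that Gâteaux-differentiability at $X$ simultaneously forces both hypotheses of that corollary to hold (namely condition (A) and nonemptiness of $\partial\rho(X)$) and makes the gradient allocation coincide with the subgradient allocation attached to the unique subgradient.

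First I would invoke the fact, recalled at the end of Section \ref{sec:CA}, that Gâteaux-differentiability of $\rho$ at $X$ implies continuity of $\rho$ at $X$; by the remark immediately after the statement of (A), continuity of a (convex) risk measure at $X \in \text{dom\,}\rho$ delivers condition (A) at $X$. This handles the first hypothesis.

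Next I would use the standard characterization cited via \cite{Zalinescu2002}: $\partial\rho(X)$ is a singleton $\{\xi\}$ precisely when $\rho$ is Gâteaux-differentiable at $X$, in which case the Gâteaux derivative of $\rho$ at $X$ in the direction $Y$ is given by $\mathbb{E}[\xi Y]$, as displayed in \eqref{eq:subgradient_allocation}. In particular $\partial\rho(X)\neq\emptyset$, and applying the identity with $Y = X_i$ yields
\[
\lim_{h_i\rightarrow 0}\frac{\rho(X+h_i X_i)-\rho(X)}{h_i} = \mathbb{E}[\xi X_i],
\]
so the limit defining $k_i$ exists and coincides with the $i$-th component of the subgradient allocation attached to $\xi$.

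Both hypotheses of Corollary \ref{cor:subgr} now being in force, that corollary produces suitability of $k=(k_1,\ldots,k_n)$ for performance measurement with $\alpha$ at $X$, which is the desired conclusion. There is no substantive obstacle: the argument is bookkeeping around two standard facts already recalled in the paper. The only point to be careful about is the implicit convexity of $\rho$ (as in the preceding corollaries of this section), which is what lets us identify the Gâteaux derivative with the unique element of the subdifferential and thereby reuse Corollary \ref{cor:subgr}.
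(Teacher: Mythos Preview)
Your proposal is correct and follows exactly the approach the paper takes: the paper's proof is the one-liner ``This follows directly from Corollary \ref{cor:subgr} and the reasoning from Section \ref{sec:CA},'' and you have simply unpacked that reasoning (G\^ateaux-differentiability $\Rightarrow$ continuity $\Rightarrow$ (A); G\^ateaux-differentiability $\Rightarrow$ $\partial\rho(X)$ is a singleton; gradient allocation coincides with the unique subgradient allocation). Your closing caveat about implicit convexity is apt and matches the standing assumptions used throughout the chain of corollaries.
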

\begin{proof}
This follows directly from Corollary \ref{cor:subgr} and the reasoning
from Section \ref{sec:CA}.
\end{proof}

\begin{example}\label{ex:distortion}
Assume that the underlying probability space is non-atomic.
Distortion-ex\-po\-nen\-tial risk measures from \citet{Tsanakas2003}
are defined as
\begin{equation}
\rho_{\psi,a}(X)=\frac{1}{a}\ln\mathbb{E}_{\psi}(\exp(-aX)),
\end{equation}
where $a> 0$ and $\mathbb{E}_{\psi}[\cdot]$ is the Choquet integral
from \eqref{eq:Choquet} with a concave map $\psi$. Let $\psi$ be
differentiable. Then we know from Proposition 2 in \citet{Tsanakas2009}
that $\rho_{\psi,a}$ is G{\^a}teaux-differentiable at $X$, if and
only if $F_X^{-1}$ is strictly increasing. The gradient risk capital allocation in
this case is given by
\[
k_i=\frac{\mathbb{E}[-X_ie^{-aX}\psi'(F_X(X))]}
{\mathbb{E}[e^{-aX}\psi'(F_X(X))]},
\]
where $F_X^{-1}$ is the quantile function
\[
F_X^{-1}(p)=\inf\{x\in\mathbb{R}:F_X(x)\geq p\}.
\]
Note that with $\psi(t)=t$ this class of risk measures covers the
entropic risk measure.

If we consider distortion risk measures that are defined as
Choquet-integrals with a concave map $\psi$, see \eqref{eq:Choquet},
then from \citet{Carlier2003} the following result holds. Let $\psi$
be differentiable. Then $\rho_{\psi}(X)=\mathbb{E}_{\psi}[-X]$ is
G{\^a}teaux-differentiable if and only if $F_X^{-1}$ is strictly
increasing. In this case the gradient risk capital allocation is
given by
\[
k_i=\mathbb{E}[-X_i\psi'(F_X(X))].
\]
\end{example}

As we have mentioned in Section \ref{sec:CA} the interpretation $t_i=\theta(X_i) = \mathbb{E}[X_i]$ of the reward capital allocation in Definition \ref{def:Tasche} is not the only one possible. Thus we will study the other case
\[
t_i
=\lim_{h_i\to 0}\frac{\mathbb{E}[X+h_iX_i]-\mathbb{E}[X]}{h_i}
=\mathbb{E}[X_i]
\]
in the following two subsections.  In Subsection
\ref{subsec:reward-risk-allocations} we won't fix a specific reward
capital allocation in the definition of suitability for performance
measurement. In case of G{\^a}teaux-differentiable reward and risk
measures this will nevertheless lead to gradient reward and risk
capital allocations. In Subsection \ref{subsec:partially_diff_rrr} we
will follow a similar approach as \cite{Tasche2004} (see Definition
\ref{def:Tasche}), fix the gradient reward capital allocation in the
numerator and show that the only risk capital allocation that is
suitable for performance measurement is the gradient risk capital
allocation.

\subsection{Suitability of reward-risk allocations for
performance
measurement}\label{subsec:reward-risk-allocations}\label{subsec:3-1}
One way of dealing with portfolios that lead to \emph{arbitrage
opportunities} ($\theta(X)>0$ and $\rho(X)\leq 0$) or are
\emph{irrational} ($\theta(X)\leq 0$, $\rho(X)>0$) is to set the
performance measure for such portfolios to $+\infty$ respectively
$0$. This is the way we have followed in the previous section.
Another way is to restrict the performance measure to portfolios that
do not allow for such possibilities. This is the direction we will
follow in this subsection. Let $\mathcal{X}$ be a subset of $L^p$,
$1\leq p \leq \infty$, such that for each $X\in\mathcal{X}$ either
\begin{equation}
\theta(X)>0\quad\text{and}\quad\rho(X)>0
\end{equation}
or
\begin{equation}
\theta(X)<0\quad\text{and}\quad\rho(X)<0
\end{equation}

Here we do not fix any specific reward capital allocation in the definition of suitability for performance measurement as we have done in Definition \ref{def:suitability_G}. Instead we will modify the definition of suitability to incorporate an arbitrary reward capital allocation $t=(t_1,\ldots,t_n)$ and show in the following results the existence of reward-risk capital allocations $(t,k)$ that are suitable for performance measurement.

\begin{definition}\label{def:31_1}
A reward-risk allocation $(t,k)\in\mathbb{R}^n\times\mathbb{R}^n$ is
suitable for performance measurement at
$X\in\mathcal{X}\cap\text{dom\,}\rho\cap\text{dom\,}\theta$ with a
reward-risk ratio $\alpha$ if for any decomposition of
$X=\sum_{i=1}^n X_i$ it satisfies the following conditions:
\begin{enumerate}
\item

If $\theta(X)>0$ then for any $i\in\{1,\ldots,n\}$ there exists
$\varepsilon_i>0$ such that for all $h_i\in(0,\varepsilon_i)$
\begin{equation}\label{eq:31_1}
\frac{t_i}{k_i}>\frac{\theta(X)}{\rho(X)}
\end{equation}
implies
\begin{equation}\label{eq:31_2}
\alpha(X)>\alpha(X-h_iX_i)
\end{equation}
and
\begin{equation}\label{eq:31_3}
\frac{t_i}{k_i}<\frac{\theta(X)}{\rho(X)}
\end{equation}
implies
\begin{equation}\label{eq:31_4}
\alpha(X)>\alpha(X+h_iX_i).
\end{equation}

\item

If $\theta(X)<0$ then for any $i\in\{1,\ldots,n\}$ there exists
$\varepsilon_i>0$ such that for all $h_i\in(0,\varepsilon_i)$
\eqref{eq:31_1} implies
\begin{equation}
\alpha(X+h_iX_i)>\alpha(X)
\end{equation}
and \eqref{eq:31_3} implies
\begin{equation}
\alpha(X-h_iX_i)>\alpha(X).
\end{equation}
\end{enumerate}
\end{definition}

The interpretation of this definition of suitability for performance
measurement remains the same as for Definition
\ref{def:suitability_G}. We have only changed the measure of the
standalone performance of the units $X_i$, $i=1,\ldots,n$. In
Definition \ref{def:suitability_G} we measure the standalone
performance of $X_i$ by $\theta(X_i)/k_i$ whereas in Definition
\ref{def:31_1} we change this to the ratio of reward-risk allocations
$t_i/k_i$, $i=1,\ldots,n$. Again it is assumption (A) that allows us
to formulate and prove a modification of Theorem
\ref{prop:suitability} to the suitability setting we have introduced
in Definition \ref{def:31_1}.

\begin{theorem}\label{thm:31}
Let $\rho:L^p\to\mathbb{R}\cup\{+\infty\}$, $1\leq p \leq \infty$, be
a map that satisfies condition (A) for $X\in\mathcal{X} \cap
\text{dom\,}\rho \cap \text{dom\,}\theta$. Then any reward-risk
allocation $(t,k)\in\mathbb{R}^n\times\mathbb{R}^n$ that satisfies
the properties
\begin{align}
\label{eq:31_5}h_it_i \geq \theta(X+h_iX_i)-\theta(X),\\
\label{eq:31_6}h_it_i \leq \theta(X)-\theta(X-h_iX_i),\\
\label{eq:31_7}h_ik_i \leq \rho(X+h_iX_i) - \rho(X),\\
\label{eq:31_8}h_ik_i \geq \rho(X) - \rho(X-h_iX_i),
\end{align}
for all $h\in(0,\varepsilon_i)$, $\varepsilon_i$ from (A),
$i=1,\ldots,n$, and for any decomposition $(X_1,\ldots,X_n)$ of
$X=\sum_{i=1}^n X_i$ is suitable for performance measurement with
$\alpha$ at $X$ according to Definition \ref{def:31_1}.
\end{theorem}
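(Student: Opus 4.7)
The plan is to mimic the structure of the proof of Theorem~\ref{prop:suitability}, with two adjustments. First, because we now work inside $\mathcal{X}$, the arbitrage and irrationality cases in which the numerator and denominator of $\alpha$ have opposite signs do not occur, so the convention-based parts of the earlier proof disappear; only the two sign-matched cases $\theta(X)>0,\ \rho(X)>0$ and $\theta(X)<0,\ \rho(X)<0$ remain to be treated. Second, where the proof of Theorem~\ref{prop:suitability} derived bounds on $\theta(X\pm h_iX_i)$ from $(\overline{S})$ and $(P)$, the same bounds are now supplied directly by the hypotheses \eqref{eq:31_5} and \eqref{eq:31_6}; the risk-side assumptions \eqref{eq:31_7}, \eqref{eq:31_8} are identical in form to \eqref{eq:prop_alloc_1}, \eqref{eq:prop_alloc_2}.

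Using (A) I first extract $\varepsilon_i>0$ such that $\rho(X)\rho(X\pm h_iX_i)>0$ for every $h_i\in(0,\varepsilon_i)$, so that the sign of $\alpha(X\pm h_iX_i)-\alpha(X)$ coincides with the sign of
\[
N_i^{\pm}(h_i):=\theta(X\pm h_iX_i)\,\rho(X)-\theta(X)\,\rho(X\pm h_iX_i).
\]
In Case~1 ($\theta(X)>0$, and hence $\rho(X)>0$ since $X\in\mathcal{X}$), upper-bounding $\theta(X-h_iX_i)$ via \eqref{eq:31_6} and lower-bounding $\rho(X-h_iX_i)$ via \eqref{eq:31_8}, then multiplying respectively by the positive $\rho(X)$ and $\theta(X)$, yields
\[
N_i^{-}(h_i)\leq h_i\bigl(k_i\theta(X)-t_i\rho(X)\bigr);
\]
the cross-multiplied form of \eqref{eq:31_1} forces the right-hand side to be strictly negative, which delivers \eqref{eq:31_2}. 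The implication \eqref{eq:31_3}$\Rightarrow$\eqref{eq:31_4} follows analogously from \eqref{eq:31_5} and \eqref{eq:31_7}. Case~2 ($\theta(X)<0$, $\rho(X)<0$) is the mirror image: multiplication by the now-negative $\theta(X)$ and $\rho(X)$ reverses each of the inequalities \eqref{eq:31_5}-\eqref{eq:31_8}, but the parallel computation produces $N_i^{\pm}(h_i)$ of the opposite sign, which is exactly what the second clause of Definition~\ref{def:31_1} requires.

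I do not expect a genuine obstacle; the only delicate point is the sign bookkeeping when passing between the ratio forms \eqref{eq:31_1}, \eqref{eq:31_3} and the cross-multiplied forms $t_i\rho(X)\gtrless k_i\theta(X)$ used in the estimates. This conversion is direction-preserving exactly when $k_i\rho(X)>0$, the natural sign regime for a reasonable allocation (and the regime implicitly in force in the proof of Theorem~\ref{prop:suitability}). Apart from (A), no property of $\rho$ — and, notably, no property of $\theta$ at all, neither $(\overline{S})$ nor $(P)$ — is used, the full burden of the argument having been absorbed into the hypotheses \eqref{eq:31_5}-\eqref{eq:31_8}.
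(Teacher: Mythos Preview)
Your proposal is correct and follows essentially the same route as the paper's own proof: extract $\varepsilon_i$ from (A), reduce the comparison of $\alpha$-values to the sign of $N_i^{\pm}(h_i)=\theta(X\pm h_iX_i)\rho(X)-\theta(X)\rho(X\pm h_iX_i)$, and bound this quantity by $h_i\bigl(k_i\theta(X)-t_i\rho(X)\bigr)$ (or its negative) using \eqref{eq:31_5}--\eqref{eq:31_8} together with the sign of $\theta(X)$ and $\rho(X)$. Your explicit remark that the passage from the ratio hypotheses \eqref{eq:31_1}, \eqref{eq:31_3} to their cross-multiplied forms presupposes $k_i\rho(X)>0$ is a point the paper's proof leaves implicit as well.
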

\begin{proof}
The proof is similar to the proof of Theorem \ref{prop:suitability}.
However, here we don't rely on the properties ($\overline{S}$) and
(P) of $\theta$. Instead we can use the properties \eqref{eq:31_5}
and \eqref{eq:31_6} that are required from the reward allocation.
With assumption (A) for $\rho$ at $X\in\mathcal{X} \cap
\text{dom\,}\rho \cap \text{dom\,}\theta$ there exists
$\varepsilon_i>0$ such that for all $h_i\in (0,\varepsilon_i)$ we
have $\rho(X)\rho(X+h_i X_i)>0$ and $\rho(X)\rho(X-h_i X_i)>0$ for
any $i=1,\ldots,n$. Let $\theta(X)>0$, then from \eqref{eq:31_6} we
get
\begin{align*}
& \theta(X-h_iX_i)\rho(X)-\theta(X)\rho(X-h_iX_i)\\
\leq
& \theta(X)(\rho(X)-\rho(X-h_iX_i))-h_it_i\rho(X).
\end{align*}
Now \eqref{eq:31_8} gives us
\[
\theta(X)(\rho(X)-\rho(X-h_iX_i))-h_it_i\rho(X)
\leq
h_i(\theta(X)k_i-\rho(X)t_i).
\]
With \eqref{eq:31_1} this leads to $\alpha(X)>\alpha(X-h_iX_i)$.
Similarly we get from \eqref{eq:31_5} and \eqref{eq:31_7} the
inequality
\[
\theta(X+h_iX_i)\rho(X)-\theta(X)\rho(X+h_iX_i)
\leq
h_i(t_i\rho(X)-k_i\theta(X)).
\]
With \eqref{eq:31_3} this leads to $\alpha(X) > \alpha(X+h_iX_i)$.

Now let $\theta(X)<0$. Since $X\in\mathcal{X}$ this means that
$\rho(X)<0$. From this, \eqref{eq:31_5} and \eqref{eq:31_7} we get
\begin{align*}
&\theta(X+h_iX_i)\rho(X)-\theta(X)\rho(X+h_iX_i)\\
\geq
& h_it_i\rho(X)+\theta(X)(\rho(X)-\rho(X+h_iX_i))\\
\geq
& h_i(t_i\rho(X)-k_i\theta(X)).
\end{align*}
Together with \eqref{eq:31_1} this leads to
$\alpha(X+h_iX_i)>\alpha(X)$. By analogous steps we get from
\eqref{eq:31_6} and \eqref{eq:31_8}
\[
\theta(X-h_iX_i)\rho(X)-\theta(X)\rho(X-h_iX_i)
\geq
h_i(k_i\theta(X)-t_i\rho(X))
\]
which leads with \eqref{eq:31_3} to $\alpha(X-h_iX_i)>\alpha(X)$.
\end{proof}

As we have seen in Theorem \ref{thm:31} it was possible to carry over
the suitability for performance measurement approach from the
beginning of Section \ref{sec:suit} to a reward-risk capital
allocation setting. In a similar way as in Corollary \ref{cor:subgr}
we can provide conditions for $\theta$ and $\rho$ that guarantee the
existence of suitable reward-risk capital allocations. The next
corollary summarizes the analogue statements to the results that we
have presented in Corollary \ref{cor:subgr} to Corollary
\ref{cor:gradient-}. These can be immediately recovered from the
previous results.
\begin{corollary}\label{cor:31}
Let $\theta:\mathcal{X}\to\mathbb{R}\cup\{-\infty\}$ be a concave
reward measure and let $\rho:\mathcal{X}\to\mathbb{R}\cup\{+\infty\}$
be a convex risk measure. Then each of the following conditions is
sufficient for the reward-risk allocation
$(t,k)\in\mathbb{R}^n\times\mathbb{R}^n$ to exist and to be suitable
for performance measurement with the reward-risk ratio $\alpha$ at
$X$ according to Definition \ref{def:31_1}.
\begin{itemize}
\item[(a)]

$\rho$ satisfies (A) at $X\in\mathcal{X} \cap \text{dom\,}\rho \cap
\text{dom\,}\theta$, the sets $\partial \theta(X)$ and $\partial \rho(X)$ are nonempty and for any $\xi\in\partial\rho(X)$ and any
$\psi\in\partial\theta(X)$ the reward-risk capital allocation
$(t,k)\in\mathbb{R}^n\times\mathbb{R}^n$ is given by
\begin{equation}\label{eq:31_9}
t_i:=\mathbb{E}[\psi X_i]\quad\text{and}\quad k_i:=\mathbb{E}[\xi X_i]
\end{equation}
for any $i\in\{1,\ldots,n\}$.

\item[(b)]

$X\in\mathcal{X} \cap int(\text{dom\,}\rho) \cap
int(\text{dom\,}\theta)\subset L^p$, $1\leq p <\infty$ and the
reward-risk capital allocation $(t,k)$ is of type \eqref{eq:31_9}
with $\xi\in\partial\rho(X)$ and $\psi\in\partial\theta(X)$.

\item[(c)]

$\theta$ and $\rho$ are finite valued for any
 $X\in  L^p$, $1\leq p < \infty$ and the
 reward-risk capital allocation $(t,k)$ is of type
\eqref{eq:31_9} with $\xi\in\partial\rho(X)$ and
$\psi\in\partial\theta(X)$.

\item[(d)]

$\theta$ and $\rho$ are continuous and G{\^a}teaux-differentiable at the aggregate
position $X\in\mathcal{X}$ and the reward-risk capital allocation
$(t,k)$ is given by
\begin{align*}
k_i&=\lim_{h_i\rightarrow 0} \frac{\rho(X+h_iX_i)-\rho(X)}{h_i},\\
t_i&=\lim_{h_i\rightarrow 0} \frac{\theta(X+h_iX_i)-\theta(X)}{h_i}
\end{align*}
for any $i\in\{1,\ldots,n\}$.
\end{itemize}
\end{corollary}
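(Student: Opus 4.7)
The plan is to reduce all four parts to the master criterion of Theorem \ref{thm:31}, which requires condition (A) for $\rho$ together with the four inequalities \eqref{eq:31_5}--\eqref{eq:31_8}. Part (a) is the substantive verification; parts (b), (c), (d) then follow by checking that each of their hypotheses implies those of (a), exactly as Corollaries \ref{cor:subgr}--\ref{cor:gradient-} did in the one-sided (risk-only) setting.

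For part (a), I would simply substitute the perturbations $Y = h_i X_i$ and $Y = -h_i X_i$ into the defining inequality \eqref{eq:subdiff} of the subdifferential at $X$. With $\xi \in \partial\rho(X)$ and $k_i = \mathbb{E}[\xi X_i]$, these substitutions give $\rho(X+h_iX_i) - \rho(X) \geq h_i k_i$ and $\rho(X) - \rho(X-h_iX_i) \leq h_i k_i$, which are precisely \eqref{eq:31_7} and \eqref{eq:31_8}. The entirely symmetric substitution in the superdifferential inequality \eqref{eq:superdifferential} with $\psi \in \partial\theta(X)$ and $t_i = \mathbb{E}[\psi X_i]$ yields \eqref{eq:31_5} and \eqref{eq:31_6}. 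Together with the assumed condition (A), Theorem \ref{thm:31} delivers suitability immediately.

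For part (b), I would invoke the fact recalled in Section \ref{sec:CA} (from Proposition 3.1 of \cite{Ruszczynski2006}) that a convex risk measure is continuous and subdifferentiable at every interior point of its domain, together with the analogous statement for a concave reward measure and its superdifferential. Continuity at $X$ supplies (A) via the remark following that condition, and the nonemptiness of $\partial\rho(X)$ and $\partial\theta(X)$ lets us apply (a). Part (c) is a further specialisation: the extended Namioka-Klee Theorem, also recalled in Section \ref{sec:CA}, asserts that finite-valuedness of a convex/concave functional on $L^p$, $1 \leq p < \infty$, forces continuity on all of $L^p$, so the interior of each domain is the whole space and (c) reduces to (b). For (d), G\^ateaux-differentiability of $\rho$ at $X$ implies continuity at $X$ (hence (A)) and forces $\partial\rho(X)$ to be a singleton whose pairing with $X_i$ equals the directional derivative from \eqref{eq:subgradient_allocation}; the same holds for $\theta$. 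Thus the gradient allocation of (d) is the instance of \eqref{eq:31_9} selected by the unique subgradient and supergradient, and (a) applies.

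I anticipate no real obstacle: the substantive work is absorbed in Theorem \ref{thm:31}, and what remains is the routine observation that substituting $\pm h_i X_i$ in the sub- and super-differential inequalities gives precisely the two-sided bounds required on $h_i k_i$ and $h_i t_i$, plus the bookkeeping that each of (b), (c), (d) really supplies the continuity and sub-/super-differentiability needed to invoke (a).
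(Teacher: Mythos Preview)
Your proposal is correct and matches the paper's own treatment: the paper simply remarks that these statements ``can be immediately recovered from the previous results,'' meaning exactly the reduction to Theorem~\ref{thm:31} via the sub-/super-differential inequalities together with the continuity and (sub)differentiability facts from Section~\ref{sec:CA}, which is precisely what you outline. There is nothing to add.
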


Condition (d) from Corollary \ref{cor:31} allows us to continue
Example \ref{ex:distortion}.

\begin{example}
In the previous results we have included reward capital allocations to our approach. Thus we can continue Example \ref{ex:distortion}, where we have considered distortion risk measures, to incorporate distortion reward measures as in \eqref{eq:Choquet}. Let the underlying probability space be atomless. Let $\varphi$ be a convex and $\psi$ be a concave distortion function. Let both functions be differentiable. Then $\theta_{\varphi}(X)=\mathbb{E}_{\varphi}[X]$ is a G{\^a}teaux-differentiable coherent reward measure and $\rho_{\psi}(X)=\mathbb{E}_{\psi}[-X]$ is a G{\^a}teaux-differentiable coherent risk measure. With this, according to Definition \ref{def:31_1}, we have to check whether we have
\[
\frac{t_i}{k_i}
=
\frac{\mathbb{E}[X_i\varphi'(1-F_X(X))]}{\mathbb{E}[-X_i\psi'(F_X(X))]}
>
\frac{\mathbb{E}_{\psi}[X]}{\mathbb{E}_{\varphi}[-X]}
\quad\text{or}\quad
\frac{t_i}{k_i}
=
\frac{\mathbb{E}[X_i\varphi'(1-F_X(X))]}{\mathbb{E}[-X_i\psi'(F_X(X))]}
<
\frac{\mathbb{E}_{\psi}[X]}{\mathbb{E}_{\varphi}[-X]}
\]
to make a decision about the investment in subportfolio $X_i$.
\end{example}

G{\^a}teaux-differentiable risk and reward measures allow us to
formulate stronger results regarding suitability for performance
measurement. Thus we will deal with this aspect in a separate
subsection.

\subsection{Suitability for performance mea\-su\-rement with
G{\^a}teaux-dif\-fe\-ren\-tia\-ble reward and risk measures}
\label{subsec:partially_diff_rrr} As in the previous subsection we will
restrict our attention to portfolios that do not allow for arbitrage
opportunities or are irrational. Thus let $\emptyset \neq U\subset
\mathbb{R}^n$ be an open set such that for each $u\in U$ either
\begin{equation}\label{eq:U_1}
\rho_X(u):=\rho\left(\sum_{i=1}^n u_i X_i\right)>0
\quad
\text{and}
\quad
\theta_X(u):=\theta\left(\sum_{i=1}^n u_i X_i\right)>0,
\end{equation}
or
\begin{equation}\label{eq:U_2}
\rho_X(u):=\rho\left(\sum_{i=1}^n u_i X_i\right)<0
\quad
\text{and}
\quad
\theta_X(u):=\theta\left(\sum_{i=1}^n u_i X_i\right)<0.
\end{equation}
Furthermore let $U\subset \mathbb{R}^n$ be such that
$\rho\left(\sum_{i=1}^n u_i X_i\right)\in \mathbb{R}$ and
$\theta\left(\sum_{i=1}^n u_i X_i\right)\in \mathbb{R}$. Analogously
to the previous section we define
\begin{equation}\label{eq:partially_RRR}
\alpha_X(u):=\frac{\theta_X(u)}{\rho_X(u)}
\end{equation}
and denote reward and risk capital allocations, which are vector fields
here, by
\begin{align*}
t&=(t_{1},\ldots,t_{n}):U\rightarrow\mathbb{R}^{n}\\
k&=(k_{1},\ldots,k_{n}):U\rightarrow\mathbb{R}^{n}.
\end{align*}
In this setting, as can be seen from the notation above, each $u\in
U$ represents a portfolio. Here we fix an arbitrary decomposition of
$X$ and the only assumption that will be needed for all the following
results is that $\theta_X$ and $\rho_X$ are partially differentiable
with continuous derivatives in some $u\in U$. This corresponds to
G\^{a}teaux-differentiability of $\theta$ and $\rho$ at some
aggregate position $\sum_{i=1}^n u_i X_i$.

Now, following the lines of \cite{Tasche2004}, we can state a stronger
definition of suitability for performance measurement that fits this
framework.
\begin{definition}
\label{def:Suitable} We deem a reward-risk allocation principle
$(t,k)$ \emph{suitable for performance measurement with} $\alpha_X$
at $u\in U$ if the following two conditions hold:
\begin{enumerate}
\item For any $i\in\left\{ 1,\ldots,n\right\} $ there is some
    $\varepsilon_{i}>0$ such that
\begin{equation}\label{eq:suitable_pos_1}
\frac{t_i(u)}{k_{i}(u)}
>
\frac{\theta_{X}(u)}{\rho_{X}(u)}
\end{equation}
implies
\begin{equation}\label{eq:suitable_pos_2}
\alpha_{X}(u+se_{i})>\alpha_{X}(u)>\alpha_{X}(u-se_{i}).
\end{equation}
for all $s\in(0,\varepsilon_{i})$.
\item For any $i\in\left\{ 1,\ldots,n\right\} $ there is some
    $\varepsilon_{i}>0$ such that
\begin{equation}\label{eq:suitable_neg_1}
\frac{t_i(u)}{k_{i}(u)}
<
\frac{\theta_{X}(u)}{\rho_{X}(u)}
\end{equation}
implies
\begin{equation}\label{eq:suitable_neg_2}
\alpha_{X}(u-se_{i})>\alpha_{X}(u)>\alpha_{X}(u+se_{i}).
\end{equation}
for all $s\in(0,\varepsilon_{i})$.
\end{enumerate}
\end{definition}

Note the difference between our definition of suitability for
performance measurement with $\alpha_X$ and the Definition 3 of
\citet{Tasche2004}. Definition 3 of \citet{Tasche2004} requires the
capital allocation $k$ to satisfy \eqref{eq:suitable_pos_1} and
\eqref{eq:suitable_neg_1} for all linear functions $\theta$ with
$\theta(0)=0$ and all portfolios $u\in U$. In our formulation we
restrict this requirement to a specific reward-risk ratio $\alpha_X$
and a specific portfolio $u\in U$.

Furthermore note the difference between Definition \ref{def:Suitable}
and the Definitions \ref{def:suitability_G} and \ref{def:31_1}.
Definitions \ref{def:suitability_G} and \ref{def:31_1} have a one-sided
nature in \eqref{eq:SUIT_G2} and \eqref{eq:SUIT_G4}, respectively
\eqref{eq:31_2} and \eqref{eq:31_4}, whereas in Definition
\ref{def:Suitable} we are able to formulate \eqref{eq:suitable_pos_2}
and \eqref{eq:suitable_neg_2} as
\begin{align*}
\alpha_{X}(u+se_{i})&>\alpha_{X}(u)>\alpha_{X}(u-se_{i})\\
\alpha_{X}(u-se_{i})&>\alpha_{X}(u)>\alpha_{X}(u+se_{i})
\end{align*}
This is justified by the one-sided nature of the sub-, respectively
superdifferential, of $\theta$ and $\rho$ that was needed to show
existence of suitable capital allocations without the requirement of
G{\^a}teaux-differentiability of $\theta$ and $\rho$. On the other
hand, if we assume that $\theta$ and $\rho$ are
G{\^a}teaux-differentiable then the two-sided nature of the
G{\^a}teaux-derivative allows us to formulate a stronger version of
suitability for performance measurement. This is the formulation of
Definition \ref{def:Suitable}.

Now we can state and prove in a different way our version of Theorem 1
in \cite{Tasche2004}. This version fits the requirements of our
definition of suitability for performance measurement with $\alpha_X$
in $u\in U$ in conjunction with reward-risk ratios instead of mean-risk
ratios. The following result characterizes the gradient risk capital
allocation in this performance measurement framework, if we fix the
reward capital allocation to be the gradient reward capital allocation.

\begin{theorem}\label{thm:grad_alloc_ex_and_uq}
Let $\emptyset\neq U\subset\mathbb{R}^{n}$ be an open set such that
\eqref{eq:U_1} and \eqref{eq:U_2} are true and let
$\theta_{X}:U\rightarrow\mathbb{R}$ and
$\rho_{X}:U\rightarrow\mathbb{R}$ be partially differentiable in $u\in
U$ with continuous derivatives. Furthermore let
$t=(t_{1},\ldots,t_{n}):U\rightarrow\mathbb{R}^{n}$ and
$k=(k_{1},\ldots,k_{n}):U\rightarrow\mathbb{R}^{n}$ be continuous in
$u\in U$. Let $\alpha_X$ be a reward-risk ratio of type
\eqref{eq:partially_RRR}. Then we have the following two statements:
\begin{itemize}
\item[(a)] The gradient reward-risk allocation $(t,k)$, defined by
\begin{align}\label{eq:rw-gradient_allocation}
t_i(u)&:=\frac{\partial \theta_X(u)}{\partial u_i},\\
k_i(u)&:=\frac{\partial \rho_X(u)}{\partial u_i},\label{eq:rs-gradient_allocation}
\quad i=1,\ldots,n,\,  u\in U
\end{align}
is suitable for performance measurement with $\alpha_X$ at $u\in
U$.
\item[(b)]

Let the reward capital allocation
$t=(t_{1},\ldots,t_{n}):U\rightarrow\mathbb{R}^{n}$ be given by the
gradient reward capital allocation from
\eqref{eq:rw-gradient_allocation}. If a risk capital allocation
$k:U\rightarrow \mathbb{R}^n$ is suitable for performance
measurement with $\alpha_X$ in $u\in U$ for any $\theta_X$ that is
partially differentiable in $u\in U$, then it is the gradient risk
capital allocation from \eqref{eq:rs-gradient_allocation}.
\end{itemize}
\begin{proof}
Let us start with (a). Consider the partial derivative of $\alpha_{X}$
in $u_{i}$,
\begin{align}\label{eq:partial_of_alpha}
\frac{\partial\alpha_{X}(u)}{\partial u_{i}}
& =\rho_{X}(u)^{-2}
\left(
\frac{\partial\theta_{X}(u)}{\partial u_{i}}\rho_{X}(u)
-\theta_{X}(u)\frac{\partial\rho_{X}(u)}{\partial u_{i}}
\right)
\end{align}
With \eqref{eq:rw-gradient_allocation} and
\eqref{eq:rs-gradient_allocation}, we get
\[
\frac{\partial\alpha_{X}(u)}{\partial u_{i}}
=\rho_{X}(u)^{-2}
\left(
t_i(u)\rho_{X}(u)
-\theta_{X}(u)k_{i}(u)
\right).
\]
\eqref{eq:suitable_pos_1} now leads to
$\left(\frac{\partial\theta_{X}(u)}{\partial
u_{i}}\rho_{X}(u)-\theta_{X}(u)k_{i}(u)\right)>0$ and therefore it
follows that
\[
\frac{\partial\alpha_{X}(u)}{\partial u_{i}}>0.
\]
This gives us the existence of $\varepsilon_{i}>0$ such that
(\ref{eq:suitable_pos_2}) holds for all $s\in(0,\varepsilon_{i})$,
$i=1,\ldots,n$. The proof of condition 2 in Definition
\ref{def:Suitable} works analogously and the reward-risk capital
allocation from \eqref{eq:rw-gradient_allocation} and
\eqref{eq:rs-gradient_allocation} is suitable for performance
measurement with $\alpha_X$ in $u\in U$.

Let us now prove (b). Fix the reward capital allocation to be the
gradient reward capital allocation. Let $k$ be suitable for performance
measurement for any $\theta_X$ that is partially differentiable in
$u\in U$. We want to show, that $k$ corresponds to
\eqref{eq:rs-gradient_allocation}. Assume without any restriction, that
\begin{equation}\label{eq:assumption1}
k_{i}(u)<\frac{\partial\rho_{X}(u)}{\partial u_{i}},
\quad i\in\{1,\ldots,n\}.
\end{equation}
Since $k$ is suitable for performance measurement with $\alpha_X$ in
$u\in U$ for {\bf any} $\theta_{X}:U\rightarrow\mathbb{R}$ that is
partially differentiable in $u\in U$ we can simply choose
$\theta_{X}^{t}:U\rightarrow\mathbb{R}$ to be
$\theta_{X}^{t}(u)=t\rho_{X}(u)$ for $t>0$. This function is by
assumption partially differentiable in $u\in U$ for each $t>0$. The
choice of $\theta$ provides us now with the following
\begin{align*}
\frac{\partial\theta_{X}^{t}(u)}{\partial u_{i}}
\rho_{X}(u)-k_{i}(u)\theta_{X}^{t}(u)
& =t\frac{\partial\rho_{X}(u)}{\partial u_{i}}
\rho_{X}(u)-k_{i}(u)t\rho_{X}(u)\\
& =t\rho_{X}(u)
\left(
\frac{\partial\rho_{X}(u)}{\partial u_{i}}-k_{i}(u)
\right).
\end{align*}
Now, dependent on the sign of $\rho_X$, from \eqref{eq:assumption1} we
either have
\[
t\rho_{X}(u)\left(\frac{\partial\rho_{X}(u)}{\partial u_{i}}
-k_{i}(u)\right)>0
\]
which should lead to
\[
\alpha_{X}(u+se_{i})>\alpha_{X}(u)
>
\alpha_{X}(u-se_{i}),
\]
or we have
\[
t\rho_{X}(u)\left(\frac{\partial\rho_{X}(u)}{\partial u_{i}}
-k_{i}(u)\right)<0,
\]
which should lead to
\[
\alpha_{X}(u+se_{i})<\alpha_{X}(u)<\alpha_{X}(u-se_{i}),
\]
since $k$ is suitable for performance measurement with $\alpha_X$ in
$u\in U$. But we have $\alpha_{X}(u)
=\frac{\theta_{X}^{t}(u)}{\rho_{X}(u)}
=\frac{t\rho_{X}(u)}{\rho_{X}(u)} =t$ and hence
\[
\alpha_{X}(u+se_{i})=\alpha_{X}(u)
=\alpha_{X}(u-se_{i})\quad\mbox{or}
\quad\frac{\partial\alpha_{X}(u)}{\partial u_{i}}=0,
\]
which is a contradiction to the suitability assumption.
\end{proof}
\end{theorem}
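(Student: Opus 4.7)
The plan is to treat the two parts of the theorem by exploiting the quotient rule for $\alpha_X = \theta_X/\rho_X$, which under the stated differentiability assumptions gives the identity
\[
\frac{\partial\alpha_X(u)}{\partial u_i}
=
\frac{1}{\rho_X(u)^2}
\Bigl(\tfrac{\partial\theta_X(u)}{\partial u_i}\rho_X(u)-\theta_X(u)\tfrac{\partial\rho_X(u)}{\partial u_i}\Bigr).
\]
Everything will reduce to controlling the sign of the bracketed numerator.

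For part (a), I would substitute the definitions \eqref{eq:rw-gradient_allocation} and \eqref{eq:rs-gradient_allocation} directly into the right-hand side above, obtaining $\tfrac{\partial\alpha_X(u)}{\partial u_i}=\rho_X(u)^{-2}\bigl(t_i(u)\rho_X(u)-k_i(u)\theta_X(u)\bigr)$. The hypothesis \eqref{eq:suitable_pos_1} reads $t_i(u)/k_i(u)>\theta_X(u)/\rho_X(u)$; using that $\theta_X(u)$ and $\rho_X(u)$ have the same sign by the definition of $U$ in \eqref{eq:U_1}--\eqref{eq:U_2}, a brief sign check shows that this is equivalent to $t_i(u)\rho_X(u)-k_i(u)\theta_X(u)>0$, so $\partial\alpha_X(u)/\partial u_i>0$. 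Continuity of the partial derivatives (inherited from continuity of $\partial\theta_X,\partial\rho_X$ and of $t_i,k_i$) then gives an $\varepsilon_i>0$ on which the sign persists, and a one-variable mean value argument in the $i$-th coordinate yields the required strict monotonicity \eqref{eq:suitable_pos_2}. The case \eqref{eq:suitable_neg_1}--\eqref{eq:suitable_neg_2} is entirely symmetric, with the opposite sign.

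For part (b), I would argue by contradiction. Fix the reward allocation to be the gradient one and suppose $k_i(u)\neq \partial\rho_X(u)/\partial u_i$ at some $u\in U$; without loss of generality take $k_i(u)<\partial\rho_X(u)/\partial u_i$. The key trick, as in the author's style, is to test suitability against the specific reward functional $\theta_X^{t}(u):=t\,\rho_X(u)$, $t>0$. This is partially differentiable with continuous derivatives (inherited from $\rho_X$), its gradient reward allocation equals $t_i(u)=t\,\partial\rho_X(u)/\partial u_i$, and crucially $\alpha_X(u)\equiv t$ is constant in $u$. Plugging into the suitability premise, the quantity $t_i(u)\rho_X(u)-k_i(u)\theta_X^{t}(u)=t\rho_X(u)\bigl(\partial\rho_X(u)/\partial u_i-k_i(u)\bigr)$ is nonzero, so either \eqref{eq:suitable_pos_1} or \eqref{eq:suitable_neg_1} holds (depending on the sign of $\rho_X(u)$). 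Suitability would then force strict monotonicity of $\alpha_X$ along $e_i$ near $u$, contradicting the fact that $\alpha_X\equiv t$ is constant. Hence $k_i(u)=\partial\rho_X(u)/\partial u_i$.

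The routine parts are the quotient-rule computation and the sign bookkeeping dictated by \eqref{eq:U_1}--\eqref{eq:U_2}; the only place where care is genuinely required is the uniqueness argument in (b), because one must (i) pick a reward functional that both satisfies the smoothness hypothesis and trivializes $\alpha_X$, and (ii) verify that the suitability premise actually triggers so that the constancy of $\alpha_X$ produces a contradiction. The choice $\theta_X^t=t\rho_X$ handles both at once, which is what makes the argument go through.
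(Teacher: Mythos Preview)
Your proposal is correct and follows essentially the same approach as the paper's own proof: part (a) via the quotient-rule identity for $\partial\alpha_X/\partial u_i$ and a sign argument, and part (b) by the same contradiction using the test reward $\theta_X^t = t\rho_X$ to force $\alpha_X\equiv t$ constant. You are slightly more explicit than the paper about the mean-value step and the sign bookkeeping, but the structure and the key trick in (b) are identical.
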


\section{Suitability for performance measurement with
non-divisible\\ portfolios and connections to a game theoretic
context}\label{sec:game}

Let $N=\{1,\ldots,n\}$ and denote by $2^N$ the set of all possible
subsets of $N$. In this section we will work with the functions
\begin{equation}\label{eq:4-1}
\vartheta(S):=\theta\left(\sum_{i\in S}X_i\right)\quad\text{and}\quad
c(S):=\rho\left(\sum_{i\in S}X_i\right)
\end{equation}
where $S$ is a subset of $N$ and the maps $\theta:L^p\to \mathbb{R}$
and $\rho:L^p\to \mathbb{R}$, $1\leq p \leq \infty$, represent, with
the notation from the previous section, the reward and the risk
measure. We have chosen this approach to incorporate a study of
suitability for performance measurement in a context where it is not
possible to further divide the subportfolios $X_i$ of the portfolio
$X=\sum_{i=1}^n X_i$. Here it is only possible to add or to remove
the whole subportfolio $X_i$ to or from the portfolio $X$.

Following the approach in Subsection \ref{subsec:3-1} and
\ref{subsec:partially_diff_rrr} we will focus only on on portfolios
that do not allow for arbitrage gains or are irrational. Thus let
$G\subset 2^N$ be such that for each $S\in G$ we either have
\[
\theta\left(\sum_{i\in S}X_i\right)>0\quad\text{and}\quad
\rho\left(\sum_{i\in S}X_i\right)>0
\]
or
\[
\theta\left(\sum_{i\in S}X_i\right)<0\quad\text{and}\quad
\rho\left(\sum_{i\in S}X_i\right)<0.
\]
Thus the functions $\vartheta$ and $c$ are maps from $G$ to
$\mathbb{R}$ and we will work with the convention
$\vartheta(\emptyset)=c(\emptyset)=0$. In this context the
reward-risk ratio becomes a map $\gamma:G\to (0,\infty)$ which is
given by
\[
\gamma(S)=\frac{\vartheta(S)}{c(S)}
=\frac{\theta\left(\sum_{i\in S}X_i\right)}{\rho\left(\sum_{i\in S}X_i\right)}
\]
and the risk capital allocation is a function $\kappa:G\to
\mathbb{R}^n$ where $\kappa_i(S)$ specifies how much risk capital is
allocated to each subportfolio $X_i$ of the portfolio $\sum_{i\in
S}X_i$. With this notation we can state the appropriate definition of
suitability for performance measurement with $\gamma$ in this context.

\begin{definition}\label{def:game-suitable}
We will call a capital allocation $\kappa$ suitable for performance
measurement with $\gamma$ and a portfolio $\sum_{i\in S}X_i$ of
non-divisible subportfolios if the following two conditions hold:
\begin{enumerate}
\item For any $i\in N\setminus S$
\begin{equation}\label{eq:WS_A-1}
\frac{\theta(X_i)}{\kappa_i(S)}=\frac{\vartheta(\{i\})}{\kappa_i(S)}
>
\frac{\vartheta(S)}{c(S)}=
\frac{\theta\left(\sum_{j\in S}X_j\right)}{\rho\left(\sum_{j\in
S}X_j\right)}
\end{equation}
implies that
\begin{equation}
\frac{\theta\left(\sum_{j\in S\cup\{i\}}X_j\right)}
{\rho\left(\sum_{j\in S\cup\{i\}}X_j\right)}
=\gamma(S\cup\{i\})
>\gamma(S)
=\frac{\theta\left(\sum_{j\in S}X_j\right)}{\rho\left(\sum_{j\in
S}X_j\right)}
.\label{eq:WS_A2-1}
\end{equation}

\item For any $i\in N\setminus S$
\begin{equation}\label{eq:WS_B-1}
\frac{\theta(X_i)}{\kappa_i(S)}=\frac{\vartheta(\{i\})}{\kappa_i(S)}
<
\frac{\vartheta(S)}{c(S)}=
\frac{\theta\left(\sum_{j\in S}X_j\right)}{\rho\left(\sum_{j\in
S}X_j\right)}
\end{equation}
implies that
\begin{equation}
\frac{\theta\left(\sum_{j\in S\cup\{i\}}X_j\right)}
{\rho\left(\sum_{j\in S\cup\{i\}}X_j\right)}
=\gamma(S\cup\{i\})
<\gamma(S)
=\frac{\theta\left(\sum_{j\in S}X_j\right)}{\rho\left(\sum_{j\in
S}X_j\right)}
.\label{WS_B2-1}
\end{equation}
\medskip

\end{enumerate}
\end{definition}

This definition of suitability differs from our previous definitions.
Here, starting from a portfolio $\sum_{i\in S}X_i$, we are interested
in risk allocations that give us the right information whether to add a
specific subportfolio $X_i$, $i\in N\setminus S$, to our portfolio
$\sum_{i\in S}X_i$ or not. If the standalone performance of
subportfolio $X_i$ (measured by $\vartheta(\{i\})/\kappa_i(S)$) is
better than the performance of the whole portfolio $\sum_{i\in S}X_i$
then the subportfolio $X_i$ should be added to the portfolio to improve
the performance of the portfolio. On the other hand, if the standalone
performance of the subportfolio $X_i$, $i\in N\setminus S$, is worse
than the performance of $\sum_{i\in S}X_i$ then adding this
subportfolio to the portfolio should worsen the performance of
$\sum_{i\in S}X_i$.

The connection of this approach to a game theoretic setting can be
established by considering the set $N$ to be a finite set of players
such that the subsets $S\subset N$ can be interpreted as coalitions of
players. Then $c(S)$ represents the cost and $\vartheta(S)$ represents
the gain the coalition of players $S$ generates in a game. From the
viewpoint of the coalition of players $S$ we are interested in cost
allocation principles that give us the right information whether to add
player $i$ to our coalition $S$ or not. In this context $\gamma$
provides a possible tool and Definition \ref{def:game-suitable}
provides the criteria that are needed to make such a decision.

For details about cooperative games we refer the reader to
\cite{Shapley1953}, \cite{Hart1983}, \cite{Faigle1992} and
\cite{Branzei2005}.

In cooperative game theory reward functions like $\vartheta$
(characteristic functions of a game) are often assumed to satisfy
certain properties like superadditivity, which is in this framework the
property
\[
\vartheta(S\cup T)\geq \vartheta(S)+\vartheta(T) \quad
    \text{for all} \quad S,T\subseteq N, \ S\cap
    T=\emptyset.
\]

For the following proposition we will assume that the reward function
$\vartheta$ satisfies a stronger property, namely the additivity
property
\begin{equation}
\vartheta(S\cup\{i\})
=
\vartheta(S)+\vartheta(\{i\})\label{eq:spec_additivity}
\end{equation}
for all $i\in N$ and all $S\subseteq N\setminus\{i\}$. We can see
immediately that this property is equivalent to
\[
\vartheta(S\cup T)
=
\vartheta(S)+\vartheta(T)
\quad
\text{for any } S,T\subseteq N,\, S\cap T=\emptyset.
\]

  This property enables
us to state conditions for allocations to be suitable for performance
measurement with $\gamma$ and a coalition of players $S\in G$ in a
cooperative game-theoretic framework. But we will not need $\vartheta$ to
satisfy property \eqref{eq:spec_additivity} for all $i\in N$ and all
$S\subseteq N\setminus\{i\}$. We will only need the restriction to a
specific coalition of players $S\in G$ and any $i\in N\setminus S$.
\begin{prop}\label{prop:game-suitability}
If the reward function $\vartheta$ satisfies the additivity property
\eqref{eq:spec_additivity} for the coalition of players $S\in G$ and
any $i\in N\setminus S$ then any cost allocation $\kappa$ that
satisfies
\begin{align}
  \kappa_i(S)\geq c(S\cup \{i\})-c(S)
  \quad & \text{if } c(S\cup \{i\})>0
  \quad \text{and}\label{eq:game_1}\\
  \kappa_i(S)\leq c(S\cup \{i\})-c(S)
  \quad & \text{if } c(S\cup \{i\})<0\label{eq:game_2}
\end{align}
for any $i\in N\setminus S$ is suitable for performance measurement
with $\gamma$ and the coalition of players $S\in G$.
\begin{proof}
From \eqref{eq:WS_A-1} together with the additivity property
\eqref{eq:spec_additivity} of $\vartheta$ we get
\begin{align*}
\gamma(S\cup\{i\})-\gamma(S)
& = \frac{\vartheta(S\cup\{i\})c(S)
-\vartheta(S)c(S\cup\{i\})}{c(S\cup\{i\})c(S)}\\
& > \frac{\vartheta(S)[c(S)+\kappa_i(S)-c(S\cup\{i\})]}
{c(S\cup\{i\})c(S)}\\
& = \gamma(S)
\cdot \frac{c(S)+\kappa_i(S)-c(S\cup\{i\})}{c(S\cup\{i\})}.
\end{align*}
With \eqref{eq:game_1} and \eqref{eq:game_2} we get
\[
\gamma(S\cup\{i\})-\gamma(S)>0.
\]
Now, starting with \eqref{eq:WS_B-1} leads with the same arguments to
\[
\gamma(S\cup\{i\})-\gamma(S)<0,
\]
which finally proves the statement.
\end{proof}
\end{prop}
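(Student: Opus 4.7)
The plan is to compute the difference $\gamma(S\cup\{i\})-\gamma(S)$ directly and show that its sign matches what Definition \ref{def:game-suitable} demands. First I would combine the two fractions over the common denominator $c(S\cup\{i\})c(S)$, obtaining a numerator $\vartheta(S\cup\{i\})c(S)-\vartheta(S)c(S\cup\{i\})$. Then I would immediately invoke the additivity hypothesis \eqref{eq:spec_additivity} to replace $\vartheta(S\cup\{i\})$ with $\vartheta(S)+\vartheta(\{i\})$, reducing the numerator to
\[
\vartheta(\{i\})c(S)+\vartheta(S)\bigl[c(S)-c(S\cup\{i\})\bigr].
\]

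For the first suitability condition, the strategy is to use the hypothesis \eqref{eq:WS_A-1} — after appropriate cross-multiplication — to estimate $\vartheta(\{i\})c(S)$ strictly in terms of $\vartheta(S)\kappa_i(S)$. This should refactor the numerator as $\vartheta(S)\bigl[c(S)+\kappa_i(S)-c(S\cup\{i\})\bigr]$ plus a strict correction term with a controlled sign, so that the whole difference becomes a multiple of $\gamma(S)$ by $\bigl[c(S)+\kappa_i(S)-c(S\cup\{i\})\bigr]/c(S\cup\{i\})$. At that point, conditions \eqref{eq:game_1}--\eqref{eq:game_2}, split exactly along the sign of $c(S\cup\{i\})$, control the sign of the bracketed term; combined with the sign of the denominator, this yields $\gamma(S\cup\{i\})>\gamma(S)$, as required by \eqref{eq:WS_A2-1}. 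The second suitability condition is fully symmetric: starting from \eqref{eq:WS_B-1} and running the same chain with reversed strict inequalities delivers \eqref{WS_B2-1}.

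The main obstacle I expect is sign bookkeeping. The proposition intentionally separates its hypothesis according to the sign of $c(S\cup\{i\})$, whereas the assumption $S\in G$ only constrains the joint sign of $\vartheta(S)$ and $c(S)$. Crucially, nothing a priori fixes the sign of $\kappa_i(S)$ itself, so the step of cross-multiplying the ratio hypothesis \eqref{eq:WS_A-1} to obtain a usable product inequality requires care to preserve the direction of the strict inequality. Once this sign accounting is done correctly — and once one verifies that the correction term introduced by passing from \eqref{eq:WS_A-1} to its cross-multiplied form has the right sign in each regime — the conclusion follows immediately by substituting \eqref{eq:game_1} or \eqref{eq:game_2} and reading off the sign of the resulting expression.
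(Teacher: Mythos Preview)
Your proposal is correct and follows essentially the same route as the paper: both compute $\gamma(S\cup\{i\})-\gamma(S)$ over the common denominator $c(S\cup\{i\})c(S)$, use additivity together with the ratio hypothesis to bound the result by $\gamma(S)\cdot\bigl[c(S)+\kappa_i(S)-c(S\cup\{i\})\bigr]/c(S\cup\{i\})$, and then read off the sign from \eqref{eq:game_1}--\eqref{eq:game_2}. The only difference is presentational: you apply additivity first and then cross-multiply \eqref{eq:WS_A-1}, while the paper combines the two in a single step; your explicit flagging of the sign bookkeeping (the sign of $\kappa_i(S)c(S)$ when cross-multiplying, and of $c(S\cup\{i\})c(S)$ in the denominator) is something the paper handles tacitly.
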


An obvious candidate for a capital allocation that is suitable for
performance measurement with $\gamma$ and the coalition of players $S$
is the marginal contribution allocation defined by
\begin{equation}\label{eq:marginal_allocation}
  \kappa_i(S):=c(S\cup\{i\})-c(S)
=
\rho
\bigg(
\sum_{j\in S}X_j+X_i
\bigg)
-\rho
\bigg(
\sum_{j\in S}X_j
\bigg).
\end{equation}
Thus a straightforward consequence of Proposition
\ref{prop:game-suitability} is the following Corollary.

\begin{corollary}\label{cor:wwa}
If the reward function $\vartheta$ satisfies the additivity property
\eqref{eq:spec_additivity} for the coalition of players $S\in G$ and
any $i\in N\setminus S$  then the mar\-gi\-nal con\-tri\-bution
allocation defined in \eqref{eq:marginal_allocation} is suitable for
performance measurement with $\gamma$ and the coalition of players
$S\in G$.
\end{corollary}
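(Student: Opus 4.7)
The plan is to derive this as an immediate corollary of Proposition \ref{prop:game-suitability}: I will simply verify that the marginal contribution allocation \eqref{eq:marginal_allocation} satisfies the two hypotheses \eqref{eq:game_1} and \eqref{eq:game_2} of the proposition, and then invoke the proposition to conclude suitability. The additivity assumption \eqref{eq:spec_additivity} on $\vartheta$ for $S$ and any $i\in N\setminus S$ is already granted by the hypothesis of the corollary, so no further work is needed on the reward side.

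The key observation is that the marginal contribution allocation $\kappa_i(S):=c(S\cup\{i\})-c(S)$ is defined precisely so that both sides of \eqref{eq:game_1} and \eqref{eq:game_2} coincide, giving equality rather than strict inequality. In particular, regardless of whether $c(S\cup\{i\})>0$ or $c(S\cup\{i\})<0$, the required weak inequality on $\kappa_i(S)$ versus $c(S\cup\{i\})-c(S)$ holds trivially. Hence the two-line verification is:
\[
\kappa_i(S) = c(S\cup\{i\})-c(S),
\]
which instantly implies $\kappa_i(S)\geq c(S\cup\{i\})-c(S)$ and $\kappa_i(S)\leq c(S\cup\{i\})-c(S)$ simultaneously, so both conditions of Proposition \ref{prop:game-suitability} are met for every $i\in N\setminus S$.

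Since the proposition's hypotheses are satisfied, I conclude directly that $\kappa$ as defined in \eqref{eq:marginal_allocation} is suitable for performance measurement with $\gamma$ and the coalition $S\in G$, in the sense of Definition \ref{def:game-suitable}. There is essentially no obstacle here — the corollary is a one-line specialization whose only content is that the marginal contribution allocation is the extremal (in fact boundary) case of the allocations admitted by Proposition \ref{prop:game-suitability}. The only thing to be careful about is to write the argument so that it makes explicit that \eqref{eq:game_1} is used when $c(S\cup\{i\})>0$ and \eqref{eq:game_2} when $c(S\cup\{i\})<0$ (both hold with equality in either case), so that the dichotomy in the sign of $c(S\cup\{i\})$ coming from $G$ is correctly handled.
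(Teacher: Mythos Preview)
Your proposal is correct and matches the paper's approach exactly: the paper presents this corollary as ``a straightforward consequence of Proposition~\ref{prop:game-suitability}'' without even writing out a proof, since the marginal contribution allocation satisfies \eqref{eq:game_1} and \eqref{eq:game_2} with equality.
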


Useful properties a cost allocation in this framework can have are
\begin{description}
\item [{Efficiency:}] $c(N)=\sum_{i=1}^{n}\kappa_i(N)$
\item [{Symmetry:}] $\kappa_i(N)=\kappa_j(N)$ if $c(\{i\}\cup
    S)=c(\{j\}\cup S)$ for all $S\subseteq N\setminus\{i,j\}$
\item [{Dummy~Player~Property:}] $\kappa_i(N)=c(\{i\})$ if
    $c(\{i\}\cup S)=c(\{i\})+c(S)$ for every subset $S\subseteq
    N\setminus\{i\}$.
\end{description}
By restricting the whole framework of this section to the set
$S=N\setminus\{i\}$ the marginal contribution allocation corresponds to
the so called \emph{with-without allocation}, introduced in
\cite{Merton1993} in a risk-capital context, see \eqref{eq:WWA}. The
with-without allocation has the symmetry and the dummy player property,
but in general it fails to have the efficiency property. To achieve
efficiency for this allocation principle a simple normalization
procedure,
\[
k_i
=
\frac{\rho(X)-\rho(X-X_i)}{\sum_{j=1}^n \rho(X)-\rho(X-X_j)}\rho(X),
\]
can be performed, but this is only possible if $\sum_{j=1}^n
\rho(X)-\rho(X-X_j)$ is nonzero.

The formulation of suitability for performance measurement in this
section has a specific advantage. We have shown that there exists an
allocation principle that is suitable for performance measurement with
a performance measure $\gamma$ according to Definition
\ref{def:game-suitable}. This allocation principle is the with-without
allocation known from \cite{Merton1993} and \cite{Matten1996}.
Therefore, if one accepts Definition \ref{def:game-suitable} as
appropriate for a specific problem-setting the advantage of this
allocation, although in general it fails to have the efficiency
property, is that it does not require the cost function (accordingly
the risk measure with the definition in \eqref{eq:4-1}) to have any
specific properties. This allocation principle exists whether the
underlying risk measure is convex, continuous,
G{\^a}teaux-differentiable or not.

The framework of Subsection \ref{subsec:partially_diff_rrr}
corresponds, in contrast to this section, to a fractional players
framework in game theory (see \cite{Aumann1974})  and thus the
results obtained in Subsection \ref{subsec:partially_diff_rrr} are
easily transferable to a game theoretic setting by using  a
fractional cost and reward function, ie,
\begin{align*}
\vartheta(u)=\theta\left(\sum_{i=1}^n u_i X_i \right),
\quad & u\in U\subset\mathbb{R}^n\\
c(u) = \rho\left(\sum_{i=1}^n u_i X_i \right),
\quad &  u\in U\subset\mathbb{R}^n.
\end{align*}

\bibliography{suitability}
\bibliographystyle{chicago}

\end{document}